\documentclass[a4paper,USenglish,autoref,thm-restate,cleveref]{lipics-v2021}
\nolinenumbers
\bibliographystyle{plainurl}

\usepackage{amsmath,amssymb,amsthm}
\usepackage{amsfonts}
\usepackage{booktabs}
\usepackage{hyperref}
\usepackage{todonotes}
\usepackage{xcolor}
\usepackage{xspace}
\usepackage[algoruled,vlined]{algorithm2e}
\DontPrintSemicolon
\SetArgSty{}
\SetKw{KwOr}{or}
\SetKw{KwAnd}{and}
\SetKw{KwNot}{not}

\newcommand{\cB}{\ensuremath{\mathcal{B}}\xspace}
\newcommand{\cC}{\ensuremath{\mathcal{C}}\xspace}

\newcommand{\cI}{\ensuremath{\mathcal{I}}\xspace}

\newcommand{\cT}{\ensuremath{\mathcal{T}}\xspace}
\newcommand{\NN}{\ensuremath{\mathbb{N}}\xspace}

\definecolor{defblue}{rgb}{0.121,0.47,0.705}
\let\emph\relax\DeclareTextFontCommand{\emph}{\color{defblue}\em}

\graphicspath{{figures/}}

\newcommand{\PQ}{PQ}
\newcommand{\PQP}{P}
\newcommand{\PQQ}{Q}
\newcommand{\MPQ}{MPQ}

\newcommand{\ie}{i.e.}

\title{Coloring and Recognizing Mixed Interval Graphs}

\author{Grzegorz Gutowski}{Theoretical Computer Science Department, Faculty of Mathematics and Computer Science, Jagiellonian University, Krak\'ow, Poland}{grzegorz.gutowski@uj.edu.pl}{https://orcid.org/0000-0003-3313-1237}{partially supported by the National Science Center of Poland under grant no.\ 2019/35/B/ST6/02472.}
\author{Konstanty Junosza-Szaniawski}{Warsaw University of Technology, Warsaw, Poland}{}{0000-0003-0352-8583}{}
\author{Felix Klesen}{Universit\"at W\"urzburg, W\"urzburg, Germany}{}{0000-0003-1136-5673}{}
\author{Pawe\l~Rz\k{a}\.zewski}{Warsaw University of Technology, Warsaw, Poland \and Institute of Informatics, University of Warsaw, Warsaw, Poland}{}{0000-0001-7696-3848}{}
\author{Alexander~Wolff}{Universit\"at W\"urzburg, W\"urzburg, Germany}{}{0000-0001-5872-718X}{}
\author{Johannes~Zink}{Universit\"at W\"urzburg, W\"urzburg,
  Germany}{}{0000-0002-7398-718X}{partially supported by DFG grant Wo 758/11-1.}
\authorrunning{G.~Gutowski, K.~Junosza-Szaniawski, F.~Klesen, P.~Rz\k{a}\.zewski, A. Wolff, and J.~Zink}

\keywords{Interval Graphs, Mixed Graphs, Graph Coloring}

\hideLIPIcs\ccsdesc{}\global\renewcommand\ccsdesc[2][100]{}

\begin{document}

\maketitle

\begin{abstract}
  A \emph{mixed interval graph} is an interval graph that has, for
  every pair of intersecting intervals, either an arc (directed
  arbitrarily) or an (undirected) edge.  We are particularly interested in 
  scenarios where edges and arcs are defined by the geometry of intervals.
  In a proper coloring of a mixed interval
  graph $G$, an interval~$u$ receives a lower (different) color than
  an interval~$v$ if $G$ contains arc $(u,v)$ (edge $\{u,v\}$).
  Coloring of mixed graphs has applications, for example, in scheduling with
  precedence constraints; see a survey by Sotskov [Mathematics, 2020].

  For coloring general mixed interval graphs, we present a
  $\min \{\omega(G), \lambda(G)+1 \}$-appro\-xi\-ma\-tion algorithm,
  where $\omega(G)$ is the size of a largest clique and $\lambda(G)$
  is the length of a longest directed path in~$G$.
  For the subclass of \emph{bidirectional interval graphs} (introduced
  recently for an application in graph drawing), we show that optimal
  coloring is NP-hard.  This was known for general mixed interval
  graphs.
  
  We introduce a new natural class of mixed interval graphs, which we
  call \emph{containment interval graphs}.  In such a graph, there is
  an arc~$(u,v)$ if interval~$u$ contains interval~$v$,
  and there is an edge $\{u,v\}$ if $u$ and $v$ overlap.  We show that
  these graphs can
  be recognized in polynomial time, that coloring them
  with the minimum number of colors is NP-hard, and that there is a
  2-approximation algorithm for coloring.
\end{abstract}

\section{Introduction}

In a geometric intersection graph, the vertices represent geometric
objects, and two vertices are adjacent if and only if the
corresponding objects intersect.  For example, \emph{interval graphs}
are the intersection graphs of intervals on the real line.  These
graphs are well understood: interval graphs are \emph{chordal} and can
thus be colored optimally (that is, with the least number of colors)
in polynomial time.  In other words, given an interval graph~$G$, its
\emph{chromatic number} $\chi(G)$ can be computed efficiently.

The notion of coloring can be adapted to directed graphs where an arc
$(u,v)$ means that the color of~$u$ must be smaller than that
of~$v$.  Clearly, such a coloring can only exist if the given graph is
acyclic.  Given a directed acyclic graph, its chromatic number can be
computed efficiently (via topological sorting).

A generalization of both undirected and directed graphs are
\emph{mixed graphs} that have edges and arcs.  A \emph{proper
  coloring} of a mixed graph~$G$ with vertex set~$V(G)$ is a function
$f \colon V(G) \to \NN$ such that, for any distinct vertices~$u$
and~$v$ of~$G$, the following %
conditions hold:
\begin{enumerate}
\item if there is an edge $\{u,v\}$, then $f(u) \neq f(v)$, and
\item if there is an arc $(u,v)$, then $f(u) < f(v)$.
\end{enumerate}
The objective is to minimize the number of colors.

The concept of mixed graphs was introduced by Sotskov and
Tanaev~\cite{SotskovT76} and reintroduced by Hansen, Kuplinsky, and de
Werra~\cite{HansenKW97} in the context of proper colorings of mixed
graphs.
Properly coloring mixed graphs is NP-hard even for bipartite
planar graphs~\cite{RiesW08} but admits efficient algorithms for
trees~\cite{FurmanczykKZ08Trees} and series-parallel
graphs~\cite{FurmanczykKZ08SP}.

For a mixed interval graph~$G$, the \emph{underlying undirected graph}
of~$G$, denoted by $U(G)$, has an edge for every edge or arc of~$G$.
Note that testing whether a given graph~$G$ is a mixed interval graph
means testing whether $U(G)$ is an interval graph, which
takes linear time~\cite{LuekerB79}.

\subparagraph*{Motivation.}

Mixed graphs are graphs where some vertices are connected by
(undirected) edges and others by directed arcs.
Such structures are useful for modeling relationships
that involve both directed and undirected connections and find
applications in various areas, including network analysis,
transportation planing, job scheduling, and circuit design.

Coloring mixed graphs is relevant in task scheduling problems
where tasks have dependencies and resource requirements 
\cite{TanaevSS94,Brucker95,SotskovTW02}.
In circuit design, coloring mixed graphs allows us to reduce signal
crosstalk or interference.  Other applications include modeling of
metabolic pathways in biology~\cite{BeckBCJY-SNC12},
process management in operating systems~\cite{BeckBCJY-GC15},
traffic signal synchronization~\cite{SerafiniU1989}, and
timetabling~\cite{deWerra-DM97}.  See the extensive survey by
Sotskov~\cite{Sotskov20} for other applications and relevant problems.
Coloring of mixed graphs is a challenging problem, as many techniques
known for solving graph coloring problems fail in the more general setting.

The study of mixed graph coloring for interval graphs was initiated by
Zink et al.~\cite{zwbw-ldugg-CGTA22}, motivated by the minimization of
the number of additional sub-layers for routing edges in layered orthogonal graph drawing
according to the so-called Sugiyama framework~\cite{stt-mvuhss-TSMC81}.
In a follow-up paper, Gutowski et al.~\cite{gmrswz-cmdig-GD22} resolved some of the
problems concerning interval graphs where the subset of arcs and their
orientations are given by the geometry of the intersecting intervals.
Driven by the graph drawing application, they focused on the
\emph{directional variant} where, for every pair of intersecting intervals,
there is an edge when one interval is contained in the other and there
is an arc oriented towards the right interval when the intervals overlap.

In this paper we focus on the \emph{containment variant} where, for any pair
of intersecting intervals, there is an arc oriented towards the
smaller interval when one interval is contained in the other,
and an edge when they overlap.
This is the only other natural geometric variant that can be defined
for interval graphs, but the containment variant can also be defined
for other geometric intersection graphs, or even for graphs defined by
systems of intersecting sets.

As there are already some effective techniques for the directional
variant, our hope was to use them in the containment variant, or even in more general settings.
Quite unexpectedly, the containment variant for interval
graphs turned out to be more difficult than the directional variant.
We have found our techniques for proving lower bounds for the containment variant of
interval graphs to be applicable for the \emph{bidirectional variant}
considered previously~\cite{zwbw-ldugg-CGTA22,gmrswz-cmdig-GD22}.
This variant is a generalization of the directional variant mentioned
above.  Every interval has an orientation; left-going or right-going.
There is an arc between two intervals if and only if they overlap
and their orientations agree.  Arcs between left-going intervals are
directed as in the directional variant; the condition for right-going
intervals is symmetric.
As a result, we get that minimizing the number of additional
sub-layers in layered orthogonal graph drawing according to the
Sugiyama framework is NP-hard.

\subparagraph*{Our Contribution.}

In this paper we forward the study of coloring mixed graphs where edge
directions have a geometric meaning.  To this end, we
introduce a new natural class of mixed interval graphs, which we
call \emph{containment interval graphs}.  In such a graph, there is an
arc~$(u,v)$ if interval~$u$ contains interval~$v$, and there is an
edge $\{u,v\}$ if $u$ and $v$ overlap.  For a set~\cI of intervals,
let $\cC[\cI]$ be the containment  
interval graph induced by~\cI.  We show that these graphs can be
recognized in polynomial time (\cref{sec:short-recognition}), that
coloring them optimally is NP-hard
(\cref{sec:coloring-containment-hard}), and that, for every set~\cI of
intervals, it holds that $\chi(\cC[\cI]) \le 2\omega(\cC[\cI])-1$, that is,
$\cC[\cI]$ can be colored with fewer than twice as many colors as the size
of the largest clique in~$\cC[\cI]$
(\cref{sec:two-approx-coloring-containment}).  In other words,
containment interval graphs are \emph{$\chi$-bounded}.
Our constructive proof yields a 2-approximation algorithm for coloring
containment interval graphs.

Then we prove that, for the class of bidirectional interval graphs,
optimal coloring is NP-hard (\cref{sec:bidirectional-NP-hard}).
This answers (negatively) an open problem that was asked
previously~\cite{gmrswz-cmdig-GD22}.  Our reduction is similar to the
one for containment interval graphs, but technically somewhat more
challenging.  Finally, we show that, for any mixed interval graph~$G$
without directed cycles, it holds that
$\chi(G) \leq \omega(G) \cdot (\lambda(G)+1)$,
where $\lambda(G)$ denotes the length of a longest
directed path in~$G$ (\cref{sec:coloring-general}).  Since
$\chi(G) \ge \max \{\omega(G), \lambda(G)+1\}$,
our constructive proof for the upper bound yields a
$\min \{\omega(G), \lambda(G)+1 \}$-approximation algorithm.
The upper bound is asymptotically tight in the worst case.

\begin{table}[tb]
  \centering
  \caption{Known and new results concerning subclasses of mixed
    interval graphs.  The time complexities refer to a given set of
    $n$ intervals with $m$ pairwise intersections. (We use
    {\color{gray}T}, {\color{gray}P}, and {\color{gray}C}
    as shorthand for Theorem, Proposition, Corollary, respectively.)}
  \label{tab:results}

  \smallskip\small

  \begin{tabular}{@{}l@{~~}c@{~~}>{\color{gray}}c@{\quad}c@{~~}>{\color{gray}}c@{\quad}c@{~~}>{\color{gray}}c@{\quad}c@{~~}>{\color{gray}}c@{\quad}c@{~~}>{\color{gray}}c@{}}
    \toprule
    Mixed interval & \multicolumn{8}{c@{\quad}}{Coloring}
                   & \multicolumn{2}{@{}c@{}}{Recognition} \\
    graph class    & \multicolumn{2}{@{}c@{\quad}}{complexity}
                   & \multicolumn{2}{@{}c@{\quad}}{lower bound}
                   & \multicolumn{2}{@{}c@{\quad}}{upper bound}
                   & \multicolumn{2}{@{}c@{\quad}}{approximation} & \\
    \midrule
    containment    & NP-hard & T\ref{thm:containment-NP-hard}
                   & $2\omega{-}1$ & P\ref{prop:containment-two-omega}
                   & $2\omega{-}1$ & T\ref{thm:containment-two-omega}
                   & 2 & C\ref{cor:containment-two-approx}
                   & $O(nm)$ & T\ref{thm:recognition} \\
    directional    & $O(n \log n)$ & \cite{gmrswz-cmdig-GD22}
                   & & & & & 1 & \cite{gmrswz-cmdig-GD22}
                   & $O(n^2)$ & \cite{gmrswz-cmdig-GD22} \\
    bidirectional  & NP-hard & T\ref{thm:bidirectional-NP-hard}
                   & & & & & 2 & \cite{gmrswz-cmdig-GD22}
                   & \multicolumn{2}{c@{}}{open} \\
    general        & NP-hard & \cite{gmrswz-cmdig-GD22}
                   & $(\lambda{+}2)\omega/2$ & P\ref{prop:mixed-lower-bound}
                   & $(\lambda{+}1)\omega$ & T\ref{thm:mixed-upper-bound}
                   & $\min \{\omega, \lambda{+}1 \}$ 
                   & T\ref{thm:mixed-upper-bound}
                   & $O(n{+}m)$ & \cite{LuekerB79} \\
    \bottomrule
  \end{tabular}
\end{table}

\cref{tab:results} gives an overview over known and new results
concerning the above-mentioned subclasses of mixed interval graphs.
Given a positive integer $k$, we use $[k]$ as shorthand for the set
$\{1, 2, \dots, k\}$.
When we visualize a graph coloring corresponding to a set of intervals,
we use horizontal tracks to indicate the color.
In \cref{fig:graph-classes}, we briefly analyze the relationships
between the three classes of geometrically defined mixed interval
graphs; directional~($\mathcal D$), bidirectional~($\mathcal B$), and
containment interval graphs~($\mathcal C$).

\begin{figure}[t]

  \null\hfill
  \begin{subfigure}[b]{.25\textwidth}
    \centering
    \includegraphics[page=2]{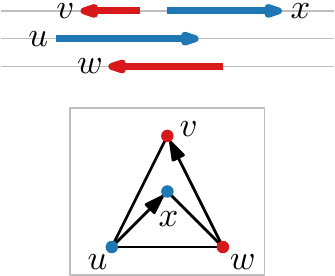}
    \caption{$G_1 \in \mathcal{C} \setminus \mathcal{B}$}
  \end{subfigure}
  \hfill
  \begin{subfigure}[b]{.25\textwidth}
    \centering
    \includegraphics[page=3]{graph-classes}
    \caption{$G_2 \in \mathcal{D} \setminus \mathcal{C}$}
  \end{subfigure}
  \hfill
  \begin{subfigure}[b]{.25\textwidth}
    \centering
    \includegraphics[page=1]{graph-classes}
    \caption{$G_3 \in \mathcal{B} \setminus \mathcal{D}$}
  \end{subfigure}
  \hfill\null

  \caption{Let $\mathcal D$, $\mathcal B$, and $\mathcal C$ be the
    classes of directional, bidirectional, and containment interval
    graphs.  Clearly, $\mathcal{D} \subseteq \mathcal B$.  The above
    sets of intervals and the corresponding directed graphs show that
    the classes~$\mathcal D$ and~$\mathcal B$ are incomparable with
    the class~$\mathcal{C}$ and that $\mathcal{D}$ is properly
    contained in~$\mathcal B$.}
  \label{fig:graph-classes}
\end{figure}

\section{Recognition of Containment Interval Graphs}
\label{sec:short-recognition}

Booth and Lueker~\cite{LuekerB79} 
introduced a data structure called \emph{\PQ-tree}
to recognize, for a given undirected graph~$G$, whether $G$ is an interval graph.
A \PQ-tree is a rooted tree of so-called \emph{\PQP-nodes},
where the order of children can be arbitrarily permuted,
and \emph{\PQQ-nodes}, where the order of children is fixed up to inversion.
A specific permutation of all nodes is called a \emph{rotation}.
One can think of the leaves of a \PQ-tree to represent the maximal cliques of~$G$
and a specific rotation to represent an order of the maximal cliques,
which implies an interval representation of~$G$
where every vertex is contained in a consecutive sequence of maximal cliques.
(Actually, a \PQ-tree can encode all possible interval representations of~$G$.)

Observe that a %
representation %
of a containment interval graph is an interval representation.
Hence, if, for a given mixed graph~$G$, a containment
representation~\cI exists, then \cI corresponds to a rotation of the
\PQ-tree constructed for the underlying undirected graph~$U(G)$
by the algorithm of Booth and Lueker~\cite{LuekerB79}.
Hence, to recognize a containment interval graph~$G$,
we proceed in three phases.
First, we compute a \PQ-tree $T$ of $U(G)$.
Second, we find a rotation of $T$ corresponding
to a containment representation of~$G$.
Third, we determine suitable endpoints for the intervals
corresponding to our selected rotation resulting in
a containment representation $\cI$ of $G$.

In the second phase,
we proceed top-down to fix the permutation of each node of~$T$
while we maintain as invariant that
before and after deciding the permutation of a single node,
we can still reach a rotation of~$T$
corresponding to a containment representation $\cI$
(provided $G$ is a containment interval graph).
Depending on the set of maximal cliques (corresponding to leaves)
a vertex~$v$ is contained in,
we can determine where $v$ is \emph{introduced} in~$T$
(roughly at the root of the subtree containing all leaves corresponding to~$v$).
Intuitively, it is ``natural'' for a vertex $u$ introduced further up in the
tree to have an arc towards a vertex~$v$ introduced further down in the tree.
However, if $u$ and $v$ are connected by an edge,
we need to permute the nodes of the \PQ-tree such that
both $u$ and $v$ start or end in the same maximal clique.
These restrictions can propagate.

If we end up with a rotation of~$T$,
we construct, in the third phase, a corresponding containment representation if possible.
To this end, we determine for every
vertex the first and the last clique it appears in,
which groups the left and right endpoint of the intervals.
Within each group,
we sort the endpoints according to the constraints
implied by the arcs and edges where possible.
What remains are induced mixed subgraphs
of vertices that start and end in the same cliques
and that behave the same with respect to every other vertex
(i.e., they are all connected to this vertex by an outgoing arc or an incoming arc or an edge).
We can interpret each such subgraph as a \emph{partially ordered set}
for which we need to check whether it is \emph{two-dimensional}
and find two corresponding linear orders,
which gives us an ordering of their left and their right endpoints.
This last part depends on the linear-time algorithm by McConnell and Spinrad~\cite{McConnellS99} that can construct such two orders for any two-dimensional poset.

\begin{restatable}[{\hyperref[thm:recognition*]{$\star$}}]{theorem}{ThmRecognition}
        \label{thm:recognition}
	There is an algorithm that, given a mixed graph~$G$, decides whether~$G$ is a containment interval graph.
	The algorithm runs in $O(nm)$ time,
	where $n$ is the number of vertices of~$G$
	and $m$ is the total number of edges and arcs of~$G$,
	and produces a containment representation of~$G$ if~$G$ admits one.
\end{restatable}

The full proof follows the ideas presented above,
but has some technical subtleties; see the appendix.

\section{A 2-Approximation Algorithm for Coloring Containment Interval
  Graphs}
\label{sec:two-approx-coloring-containment}

In this section, we present a 2-approximation algorithm for coloring
containment interval graphs, we detail how to make the algorithm run
in $O(n \log n)$ time for a set of $n$ intervals, and we construct a
family of sets of intervals that shows that our analysis is tight.

\begin{theorem}
  \label{thm:containment-two-omega}
  For any set $\cI$ of intervals, the containment interval graph
  $\cC[\cI]$ induced by \cI admits a proper
  coloring with at most $2 \cdot \omega(\cC[\cI]) - 1$ colors.
\end{theorem}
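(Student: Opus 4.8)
The plan is to process the intervals from left to right and assign colors greedily, but with a twist that exploits the containment structure. Let $\omega = \omega(\cC[\cI])$. Observe first that $\omega$ equals the maximum number of pairwise intersecting intervals in~$\cI$, since any set of pairwise intersecting intervals forms a clique in~$\cC[\cI]$ (each pair is joined by an arc or an edge) and conversely. So at every point of the real line, at most $\omega$ intervals of~$\cI$ are ``alive.'' We want to produce a proper coloring $f\colon \cI \to [2\omega-1]$ respecting: $f(u) \ne f(v)$ if $u,v$ overlap, and $f(u) < f(v)$ if $u \supsetneq v$.

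The key idea I would pursue is to sweep the left endpoints in increasing order and, when an interval~$v$ starts, look at the set $A$ of intervals currently alive that \emph{contain}~$v$ (these force $f(v)$ to be strictly larger) and the set $B$ of intervals currently alive that \emph{overlap}~$v$ with~$v$ extending to the right, together with already-placed intervals whose left endpoint we have seen — the relevant constraint is only that $f(v)$ must avoid the colors of all alive intervals overlapping~$v$ and must exceed all alive intervals containing~$v$. Since $v$'s containers are nested (they all contain~$v$, hence pairwise intersect, and by the sweep order each starts before~$v$), and any interval overlapping~$v$ from the left is alive, the total number of alive intervals other than~$v$ is at most $\omega-1$. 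The hoped-for argument: assign $f(v)$ to be the smallest color in $[2\omega-1]$ that is (i)~larger than $\max_{a\in A} f(a)$ and (ii)~not used by any interval currently alive. We must show such a color always exists in $[2\omega-1]$.

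The main obstacle, and the heart of the proof, is bounding $\max_{a \in A} f(a)$. A naive greedy gives $f(a) \le \omega$ for the topmost container only if containers never stack colors badly, but containers can be deeply nested, so we need the invariant that the color of any interval is at most $2\omega - 1 - (\text{number of intervals it currently contains that are alive})$, or something in this spirit — i.e. a container must ``leave room below itself'' for the chain of intervals nested inside it. I would set up an invariant of the form: at any moment, for every alive interval~$u$, $f(u) + d(u) \le 2\omega - 1$, where $d(u)$ is the length of the longest chain of alive intervals strictly contained in~$u$ (equivalently, the number of distinct levels of nesting strictly below~$u$ among currently alive intervals). Since the containers of a new interval~$v$ form a chain, if $t = |A|$ then the bottom container has $d \ge 0$, the next has $d \ge 1$, and the top container~$a^*$ has $d(a^*) \ge t-1$, so $f(a^*) \le 2\omega - 1 - (t-1) = 2\omega - t$; any larger color up to $2\omega - 1$ is a candidate satisfying~(i), giving $t$ candidate colors. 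Among the at most $\omega - 1$ alive intervals, at most $\omega - 1 - t$ are \emph{not} containers of~$v$ and could block via~(ii) (the $t$ containers themselves occupy colors $\le 2\omega - t$, outside our candidate range, so they cannot block). Thus at least one of the $t$ candidate colors is free, and after assigning it to~$v$ the invariant is restored (for~$v$, $d(v) \ge 0$ and $f(v) \le 2\omega-1$; for the containers, adding~$v$ can only have been accounted for already since we picked a color above all of them). Wrapping these counting inequalities together carefully — and handling ties, the base case, and the bookkeeping when intervals expire — is the technical work; once the invariant is proven to persist, the theorem follows immediately since every interval receives a color in $[2\omega-1]$.

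I would close by noting that the sweep touches each interval a constant number of times and the ``smallest free color'' query can be supported in $O(\log n)$ time with a balanced structure over $[2\omega-1]$, which gives the $O(n\log n)$ running time promised in the section intro; the tightness family can be built by iterating the worst case of the invariant, nesting $\omega-1$ forced levels inside a clique of size~$\omega$.
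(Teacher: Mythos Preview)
Your greedy sweep is a genuinely different route from the paper's, but the argument has a real gap and, in fact, the algorithm you describe does \emph{not} stay within $2\omega-1$ colors.

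The error is in the step ``the $t$ containers themselves occupy colors $\le 2\omega - t$, outside our candidate range, so they cannot block.'' Your invariant $f(u)+d(u)\le 2\omega-1$ gives $f(a^*)\le 2\omega-t$ only for the \emph{outermost} container $a^*$ (which has $d\ge t-1$); the \emph{innermost} container~$a_t$ has $d(a_t)\ge 0$, so the invariant permits $f(a_t)=2\omega-1$. Since the containers are nested, their colors increase inward, and condition~(i) demands $f(v)>\max_{a\in A}f(a)=f(a_t)$, not $f(v)>f(a^*)$. So you have bounded the wrong end of the chain, and the claimed $t$ candidate colors need not exist.

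This is not repairable by a sharper invariant, because the greedy rule itself fails. Take $\omega=3$ and the six intervals
\[
p=[0,20],\quad q=[1,21],\quad a_1=[2,100],\quad r=[25,50],\quad a_2=[40,80],\quad v=[60,70].
\]
One checks that no point is covered by more than three of them, so $\omega=3$ and $2\omega-1=5$. Your sweep assigns $f(p)=1$, $f(q)=2$, $f(a_1)=3$; then $r\subset a_1$ forces $f(r)=4$; then $a_2\subset a_1$ and $a_2$ overlaps $r$, so $f(a_2)>3$ and $f(a_2)\ne 4$, giving $f(a_2)=5$; finally $v\subset a_2\subset a_1$ forces $f(v)>5$, i.e.\ $f(v)=6$. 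Yet the instance is $5$-colorable (e.g.\ swap the roles of colors~$4$ and~$5$ for $r$ and $a_2$, and set $f(v)=5$). Note that your invariant holds right up to the failure: just before inserting~$v$ we have $f(a_2)+d(a_2)=5+0=5=2\omega-1$, so the invariant is simply too weak.

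For comparison, the paper's proof does not sweep. It peels off, in each round, an inclusion-minimal subfamily $R$ of the containment-maximal intervals that still covers $\bigcup\cI$; one shows $\cC[R]$ is an undirected linear forest (hence $2$-colorable) and that removing $R$ drops the clique number by one, so induction gives $2\omega-1$ colors. The crucial difference is that the paper spends two colors on a \emph{global} covering layer rather than committing locally, which is exactly what your greedy cannot do at~$a_2$ in the example above.
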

\begin{proof}
For simplicity, let $G := \cC[\cI]$ and $\omega := \omega(G)$.
We use induction on $\omega$. If $\omega=1$, then $G$ has no edges and clearly admits a proper coloring using only one color.
So assume that $\omega>1$ and that the theorem holds for all graphs
with smaller clique number.

Recall that a \emph{proper} interval graph is an interval graph that
has a representation where no interval is contained in another interval.
Let $M(\cI)$ denote the subset of $\cI$ consisting of intervals that
are maximal with respect to the containment relation.
In particular, $\cC[M(\cI)]$ is a proper interval graph.  Observe that
$\bigcup M(\cI) = \bigcup \cI$ (where we consider the union of
intervals as a subset of the real line).  Let $R$ be an inclusion-wise
minimal subset of $M(\cI)$ such that $\bigcup R = \bigcup \cI$.
In \cref{fig:example}, the intervals in $M(\cI)$ are marked with
crosses and the set of intervals on the lowest two (gray) lines is one
way of choosing~$R$.

\begin{claim}
  $\cC[R]$ is an undirected linear forest.
\end{claim}
\begin{claimproof}
  All intervals in $M(\cI)$ and thus in $R$ are incomparable with
  respect to the containment relation, so $\cC[R]$ has no arcs.  Note
  that $\cC[R]$ is a proper interval graph, so it contains no
  induced $K_{1,3}$ and no induced cycle with at least four
  vertices.  Thus it suffices to prove that $\cC[R]$ is triangle-free.
  For contradiction, suppose otherwise.  Let $x,y,z$ induce a triangle
  in $\cC[R]$, ordered according to their left endpoints.  As $x,y,z$
  are pairwise overlapping, note that $y \subseteq x \cup z$,
  and thus $\bigcup (R \setminus \{y\}) = \bigcup R$. This contradicts
  the minimality of~$R$.
\end{claimproof}

By the claim above, $\cC[R]$ can be properly colored with colors
$\{1,2\}$.  Let $f_1$ be such a coloring.
If $R = \cI$, we are done (using only $\omega$ many colors), so
suppose that $\cI \setminus R \ne \emptyset$.
Slightly abusing notation, we define $G' := G - R$.

\begin{claim}
The largest clique in $G'$ has at most $\omega-1$ vertices.
\end{claim}
\begin{claimproof}
  As $G'$ is a subgraph of $G$, each clique in $G'$ has at most
  $\omega$ vertices.  For contradiction, suppose that there is a set
  $S \subseteq \cI \setminus R$ such that $|S|=\omega$ and all
  intervals in $S$ pairwise intersect.  By the Helly property of
  intervals, $\bigcap S \ne \emptyset$.
  Let $p \in \bigcap S$.  Since
  $\bigcup R = \bigcup \cI$, there is an interval $r \in R$ that
  contains~$p$.  Thus $S \cup \{r\}$ is a clique in $G$ with
  $\omega +1$ vertices, which contradicts the definition of $\omega$.
\end{claimproof}

By the inductive assumption, $G'$ admits a proper coloring $f_2$ using colors $[2(\omega-1)-1]$.
Finally, we define $f \colon \cI \to [2\omega-1]$ as follows:
\[
f(x) = \begin{cases}
f_1(x) & \text{ if } x \in R,\\
f_2(x)+2 & \text{ if } x \notin R.
\end{cases}
\]
We claim that $f$ is a proper coloring of $G$.  (For an example, see
\cref{fig:example}.)

\begin{figure}[tb]
  \centering
  \includegraphics{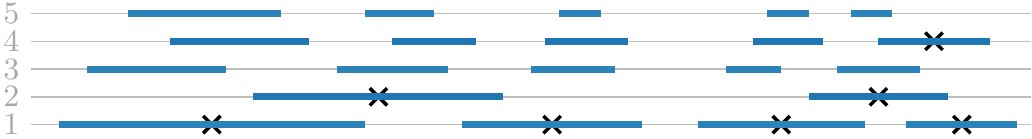}
  \caption{A set of intervals and a coloring produced by the
    2-approximation algorithm.  The intervals that lie in $M(\cI)$ at
    the top level of the recursion are marked with crosses.}
  \label{fig:example}
\end{figure}

First, note that if $x,y \in \cI$ are distinct and $x \cap y \neq \emptyset$, then $f(x) \neq f(y)$.
Indeed, if $x,y \in R$, then $f(x) = f_1(x) \neq f_1(y)=f(y)$.
If $x,y \notin R$, then $f(x) = f_2(x) +2 \neq f_2(y) + 2=f(y)$.
Finally, if, say, $x \in R$ and $y \notin R$, then $f(x) \in \{1,2\}$ and $f(y) \geq 3$.

It remains to argue that the second condition in the definition of a
proper coloring holds as well.
For a contradiction, let~$x$ and~$y$ be distinct intervals and assume
that $x \subseteq y$ and $f(y) > f(x)$.
Note that $x \notin M(\cI)$ and thus $x \notin R$.
This implies that $f(x)\ge3$.
Since we assumed that $f(y)>f(x)$, we have that $f(y)>3$.
Hence, $y \not\in R$.  However, by the inductive assumption,
we have that $f(x) = f_2(x) +2 > f_2(y)+2 = f(y)$, which yields
the desired contradiction.  This completes the proof.
\end{proof}

Observe that the proof of \cref{thm:containment-two-omega} can be
easily transformed into an efficient algorithm, which yields the
following corollary.

\begin{corollary}
  \label{cor:containment-two-approx}
  There is a 2-approximation algorithm for coloring interval
  containment graphs properly.  Given a set of $n$ intervals, the
  algorithm runs in $O(n \log n)$ time.  
\end{corollary}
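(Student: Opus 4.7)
The approximation ratio is immediate from \cref{thm:containment-two-omega}: every proper coloring of $\cC[\cI]$ restricts to a proper coloring of the underlying undirected graph, so $\chi(\cC[\cI]) \ge \omega(\cC[\cI])$, and the coloring produced by \cref{thm:containment-two-omega} uses at most $2\omega(\cC[\cI])-1 < 2\chi(\cC[\cI])$ colors. What remains is to convert the recursion in the proof into an efficient algorithm and to analyze its running time.

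My plan is to sort the intervals once by left endpoint, breaking ties by decreasing right endpoint, in $O(n \log n)$ time; this will turn out to be the bottleneck. After sorting, each recursive step can be implemented as a single left-to-right sweep of the currently surviving intervals, kept in a doubly linked list. In this sweep I identify the containment-maximal intervals $M(\cI_k)$ as those whose right endpoint strictly exceeds every preceding right endpoint still present; then a greedy continuation discards any maximal interval whose removal still leaves $\bigcup \cI_k$ covered, producing an inclusion-wise minimal cover $R_k \subseteq M(\cI_k)$. Since $\cC[R_k]$ is a linear forest by the first claim in the proof of \cref{thm:containment-two-omega}, I can 2-color $R_k$ with colors $\{2k-1, 2k\}$ by alternating along the sweep, and then delete the intervals of $R_k$ from the list before recursing.

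The main obstacle is bounding the cumulative cost of the recursion by $O(n)$ rather than the naive $O(n\omega)$. The plan is to precompute, in $O(n)$ after sorting, the containment forest that points each interval to the smallest interval strictly containing it (via a standard stack-based sweep), and to drive the recursion so that at each level only newly exposed forest nodes, together with previously unselected maxima, are inspected. An amortized analysis then charges each interval for $O(1)$ work at the unique level at which it enters some $R_k$, and a constant number of bookkeeping operations when its parent in the containment forest is removed. Hence the total work after sorting is $O(n)$, giving an overall running time of $O(n \log n)$.
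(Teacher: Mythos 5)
Your approximation-ratio argument matches the paper's and is fine. The gap is in the running-time analysis. Each of your per-level sweeps walks the list of \emph{all} currently surviving intervals (you must at least pass over every survivor to maintain the running maximum that identifies $M(\cI_k)$ and to pick the greedy cover), so level~$k$ costs $\Theta(n_k)$ where $n_k$ is the number of survivors, and the total is $\Theta(\sum_k n_k)$, which can be $\Theta(n\,\omega)=\Theta(n^2)$. Your amortization does not repair this: an interval is charged once when it enters some $R_k$ and once when its forest parent is removed, but an interval that survives a level is inspected again at \emph{every} subsequent level until it is finally selected, and nothing bounds the number of such levels. Concretely, take the $n$ intervals $[i,\,i+n+1]$ for $i\in[n]$: they pairwise overlap, there are no containments at all (so the containment forest is a set of isolated roots and your ``parent removed'' events never fire), every interval is maximal at every level, each minimal cover $R_k$ has size~$2$, and there are $\Theta(n)$ levels each sweeping $\Theta(n)$ surviving unselected maxima --- $\Theta(n^2)$ work, while your charging scheme would report $O(n)$. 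A secondary issue: the ``smallest strictly containing interval'' need not be unique (two overlapping intervals can both minimally contain a third), and an interval whose chosen parent is deleted may still be contained in another survivor, so ``newly exposed'' is not detected by parent deletion alone.

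The missing idea is a dynamic query structure rather than a static sweep order. The paper keeps the survivors in a balanced BST keyed by left endpoint, augmented at each node with the maximum right endpoint in its subtree, supporting in $O(\log n)$ time both ``leftmost survivor with left endpoint $\ge x$'' and ``among survivors with left endpoint $\le y$, one with maximum right endpoint.'' With these queries each step of building $R_k$ touches only the interval actually selected (plus $O(1)$ overhead), every interval is inserted, queried against, and deleted exactly once, and the $O(n\log n)$ bound follows. If you want to salvage your plan, replace the doubly linked list and containment forest by such an augmented search tree (or an equivalent deletion-capable range-maximum structure); the sorting step and the 2-coloring of each linear forest $\cC[R_k]$ by alternation can stay as you describe.
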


\begin{proof}
  For any graph~$G$, we have $ \chi(G) \geq \omega(G)$.  Hence, the
  approximation factor follows directly from
  \cref{thm:containment-two-omega}.
  
  It remains to implement the constructive proof of
  \cref{thm:containment-two-omega} efficiently.
  Let~\cI be the given set of intervals.  For each
  interval~$I$ in \cI, let~$r_I$ be the right endpoint of~$I$.
  We go through the intervals from left to right in several phases.
  In each phase, we use two colors, except possibly in the last phase
  where we may use only one color.  For phase~$i$ with $i \ge 1$, we
  reserve the set $\mathit{colors}(i)=\{2i-1,2i\}$.
  We use an augmented balanced binary search tree~\cT to store the
  intervals in~\cI.  We will query~\cT in two ways.  A query of
  type~Q1 in~\cT with a
  value~$x \in \mathbb{R} \cup \{-\infty\}$ will return, among all
  intervals whose left endpoint is at least~$x$, one with leftmost
  left endpoint (and $\mathit{nil}$ if such an interval does not exist).
  A query of type Q2 in \cT with a value $y \in \mathbb{R}$ will return,
  among all intervals whose left endpoint is at most~$y$, one with
  rightmost right endpoint (and $\mathit{nil}$ if such an interval
  does not exist).  Note that the two queries are not symmetric.

  Algorithm~\ref{alg:2-approx}
  describes our algorithm in pseudocode.
  Initially, \cT stores all intervals in~\cI.  The algorithm
  terminates once \cT is empty and all intervals are colored.

  \newcommand\mygraycomm[1]{\ttfamily\textcolor{lightgray}{#1}}
  \newcommand\myblackcomm[1]{\ttfamily\textcolor{black}{#1}}
  \newcommand{\mygraytcp}[1]{\SetCommentSty{mygraycomm}\tcp*{#1}\SetCommentSty{myblackcomm}}
  \begin{algorithm}
    \caption{\textsc{2-Approximate-Coloring}(set \cI of $n$ intervals)}
    \label{alg:2-approx}
    $\cT.\mathsf{initialize}(\cI)$ \;
    $i=0$ \;
   \While{\KwNot $\cT.\mathsf{empty}()$}{
      $i = i+1$ \tcp*{start new phase}
      $I=\cT.\mathsf{Q1}(-\infty)$\;
      $c = I.\mathit{color} = 2i-1$
      \mygraytcp{color $I$ and set current color}%
      $\cT.\mathsf{remove}(I)$ \;
      \While{\KwNot $\cT.\mathsf{empty}()$}{
        $I' = \cT.\mathsf{Q2}(r_I)$ \;
        \If(\tcp*[f]{no interval contains $r_I$})%
          {$I'==\mathit{nil}$ \KwOr $r_{I'} \le r_I$}{
          $I'= \cT.\mathsf{Q1}(r_I)$ \;
          \lIf{$I'==\mathit{nil}$}{%
            \textbf{break}
            \tcp*[f]{finish current phase}%
          }
          $I'.\mathit{color} = c$
          \mygraytcp{color $I'$ and keep current color}
        }
        \Else(\tcp*[f]{$I$ and $I'$ overlap}){
          $c = I'.\mathit{color} = \mathit{colors}(i) \setminus \{c\}$
          \mygraytcp{color $I'$ and swap current color}
        }
        $\cT.\mathsf{remove}(I')$ \;
        $I=I'$
      }
    }
  \end{algorithm}

  We start each phase by Q1-querying~\cT with $-\infty$.  This yields
  the leftmost interval~$I$ stored in~\cT.  We color~$I$ with the
  smaller color~$2i-1$ reserved for the current phase~$i$.  Let the
  \emph{current color} $c$ be this color.  We remove~$I$ from~\cT.
  Then we Q2-query~\cT with the right endpoint~$r_I$ of~$I$ and
  consider the following two possibilities.
  \begin{description}
  \item[Case~I:] If the Q2-query returns $\mathit{nil}$ or an interval that
  lies completely to the left of~$r_I$, we Q1-query~\cT with $r_I$ for
  an interval to the right of~$r_I$.  If such an interval~$I'$ exists,
  it must be disjoint from~$I$, so we color~$I'$ with the current
  color~$c$.  Otherwise, we start a new phase.
  \item[Case~II:] If the Q2-query returns an interval $I'$ that overlaps with
  the previous interval~$I$, we color~$I'$ with the other color $c'$
  that we reserved for the current phase, that is,
  $\{c'\}=\mathit{colors}(i) \setminus \{c\}$.  Then we set the
  current color~$c$ to~$c'$.
  \end{description}
  In either case, if we do not start a new phase, we remove $I'$
  from~\cT and proceed with the next Q2-query as above, with $I'$ now
  playing the role of~$I$.

  It remains to implement the balanced binary search tree~\cT.  
  The key of an interval is its left endpoint.  For
  simplicity, we assume that the intervals are stored in the leaves
  of~\cT and that the key of each inner node is the maximum of the
  keys in its left subtree.  This suffices to answer queries of type~Q1. 
  For queries of type Q2, we augment~\cT by storing, with each
  node~$\nu$, a value $\max(\nu)$ that we set to the maximum of the
  right endpoints among all intervals in the subtree rooted at~$\nu$.
  (We also store a pointer $\mu(\nu)$ to the interval that yields the
  maximum.)  In a Q2-query with a value~$y$, we search for the largest
  key~$k \le y$.  Let~$\pi$ be the search path in~\cT, and initialize~$m$
  with~$-\infty$.  When traversing~$\pi$, we inspect each node~$\nu$ that 
  hangs off~$\pi$ on the left side.  If $\max(\nu)>m$, then we set
  $m=\max(\nu)$ and $\rho=\mu(\nu)$.  When we reach a leaf, $\rho$
  points to an interval whose right endpoint is maximum among
  all intervals whose left endpoint is at most~$y$.

  The runtime of $O(n \log n)$ is obvious since we insert, query, and
  delete each interval in $O(\log n)$ time exactly once.
\end{proof}

\begin{proposition}
  \label{prop:containment-two-omega}
  There is an infinite family $(\cI_n)_{n \ge 1}$ of sets of intervals
  with $|\cI_n|=3 \cdot 2^{n-1}-2$, %
  $\chi(\cC[\cI_n])=2n-1$, and $\omega(\cC[\cI_n])=n$.
\end{proposition}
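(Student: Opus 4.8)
The plan is to construct $\cI_n$ by a recursion that mirrors the inductive structure of the proof of \cref{thm:containment-two-omega}. I would let $\cI_1$ be a single interval, so that $|\cI_1|=1=3\cdot 2^{0}-2$, $\omega(\cC[\cI_1])=1$, and $\chi(\cC[\cI_1])=1=2\cdot 1-1$. For $n\ge 2$, I would take two copies $A$ and $B$ of $\cI_{n-1}$, placed (after suitable translation and scaling) so that every interval of $A$ lies entirely to the left of every interval of $B$, and then add two further intervals: an interval $J_A$ that contains every interval of $A$ and is disjoint from every interval of $B$, and an interval $J_B$ that contains every interval of $B$, is disjoint from every interval of $A$, and overlaps $J_A$ without either of $J_A,J_B$ containing the other. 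This is straightforward to realize with explicit endpoints. Setting $\cI_n := A\cup B\cup\{J_A,J_B\}$ gives $|\cI_n|=2|\cI_{n-1}|+2$, which together with $|\cI_1|=1$ solves to $|\cI_n|=3\cdot 2^{n-1}-2$.

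For the clique number I would use that a clique in $\cC[\cI_n]$ is simply a set of pairwise intersecting intervals. Since no interval of $A$ meets an interval of $B$, since $J_A$ meets only intervals of $A$ and the interval $J_B$, and since $J_B$ meets only intervals of $B$ and $J_A$, every clique is contained in $A\cup\{J_A\}$, in $B\cup\{J_B\}$, or in $\{J_A,J_B\}$. As $J_A$ is adjacent in the underlying graph to all of $A$, we get $\omega(\cC[A\cup\{J_A\}])=\omega(\cC[\cI_{n-1}])+1=n$ by induction, symmetrically for $B$, and $|\{J_A,J_B\}|=2\le n$; hence $\omega(\cC[\cI_n])=n$.

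The chromatic upper bound $\chi(\cC[\cI_n])\le 2n-1$ is immediate from \cref{thm:containment-two-omega}. For the matching lower bound I would take an arbitrary proper coloring $f$ of $\cC[\cI_n]$. Since $J_A$ and $J_B$ overlap, $f(J_A)\ne f(J_B)$; let $X\in\{A,B\}$ be the copy whose attached interval attains $j:=\max\{f(J_A),f(J_B)\}$, and let $Y$ be the other copy, so $f(J_Y)<f(J_X)=j$ and $j\ge 2$. Because $J_X$ contains every interval of $X$, the graph $\cC[\cI_n]$ has the arc $(J_X,v)$ for every $v\in X$, hence $f(v)>j$ for all $v\in X$. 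On the other hand both $f(J_A)$ and $f(J_B)$ are at most $j$, so they are two distinct colors not used anywhere on $X$. Since $f$ restricted to $X$ is a proper coloring of the induced subgraph $\cC[X]\cong\cC[\cI_{n-1}]$, it uses at least $\chi(\cC[\cI_{n-1}])=2n-3$ distinct colors by the induction hypothesis; adding the two colors of $J_A,J_B$ gives that $f$ uses at least $2n-1$ colors, so $\chi(\cC[\cI_n])\ge 2n-1$.

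The routine parts are pinning down explicit coordinates that make the placement of $A$, $B$, $J_A$, $J_B$ valid and the base case $n=1$. The one point that needs care in the lower bound is ensuring that the two colors of $J_A$ and $J_B$ are genuinely not reused inside the ``heavier'' copy of $\cI_{n-1}$: this is exactly why $X$ must be chosen to be the copy lying above the \emph{larger} of the two parent colors $j$, and it is also why it is essential that $J_A$ and $J_B$ overlap — were they disjoint, both could receive color $1$ and the construction would force only $2n-2$ colors.
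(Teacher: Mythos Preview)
Your construction and argument are correct and essentially identical to the paper's: the paper also builds $\cI_n$ from two overlapping ``outer'' intervals $\ell_n,r_n$ each containing a disjoint copy of $\cI_{n-1}$, and proves the lower bound by induction, picking whichever of the two outer intervals has the larger color and observing that the copy inside it is pushed up by at least two. The only cosmetic difference is that the paper phrases the inductive invariant as ``the spread $\max f-\min f$ is at least $2i-2$'' rather than counting distinct colors, and it exhibits the upper bound by an explicit coloring instead of invoking \cref{thm:containment-two-omega}.
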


\begin{proof}
The construction is iterative.
The family $\cI_1$ consists of a single interval of unit length.

Now let $n>1$ and suppose that we have defined~$\cI_{n-1}$ and want to
define~$\cI_n$.
We introduce two new intervals $\ell_n$ and $r_n$, both of length $3^{n-1}$, that overlap slightly.
Then we introduce two copies of $\cI_{n-1}$.
All intervals of one copy are contained in $\ell_n \setminus r_n$, and all intervals of the other copy are contained in $r_n \setminus \ell_n$.

The number of intervals in $\cI_n$ is given by the recursion: $f(1)=1$ and $f(n) = 2f(n-1)+2$, 
which solves to $f(n) = 3 \cdot 2^{n-1} - 2$. Furthermore, it is straightforward to observe that with each step of the construction,
the size of a largest clique increases by $1$.

We claim that, for $i \in [n]$,
in any proper coloring of $\cC[\cI_i]$, the difference between the largest
and the smallest color used is at least $2i-2$. 
Clearly, the claim holds for $i=1$.  Now~assume that it holds for
$i=n-1$.  Consider any proper coloring of~$\cC[\cI_n]$, and let $m$ be
the minimum color used in this coloring.
The colors of $\ell_n$ and $r_n$ must be different.
Without loss of generality, suppose that the color of $r_n$ is larger
than the color of $\ell_n$.
In particular, the color of $r_n$ is at least $m+1$.
Now consider the copy of $\cI_{n-1}$ contained in $r_n$. 
The color of each interval in this copy must be larger than the
color of $r_n$, so in particular the minimum color used for this copy
of $\cI_{n-1}$ is at least $m+2$.
By the inductive assumption, some interval in this copy of~$\cI_{n-1}$
receives a color that is at least $m+2 + 2(n-1)-2 = 2n-2+m$.
Summing up, the difference between the largest and the smallest color
used for $\cC[\cI_n]$ is at least $2n-2$.

Given that the minimum color is~$1$, we conclude that $\chi(\cC[\cI_n]) \geq 2n-1$.

For the upper bound, %
we color~$\cC[\cI_n]$ as follows.
For $n=1$, we color the only interval with color $1$.
For $n > 1$, we color $\ell_n$ with color~$1$ and $r_n$ with color~$2$.
Next, for each of the two copies of $\cI_{n-1}$, we use the proper
coloring defined inductively with all colors increased by $2$, see
\cref{fig:lower-bound-two-omega}.
\end{proof}

\begin{figure}[tb]
  \centering
  \includegraphics{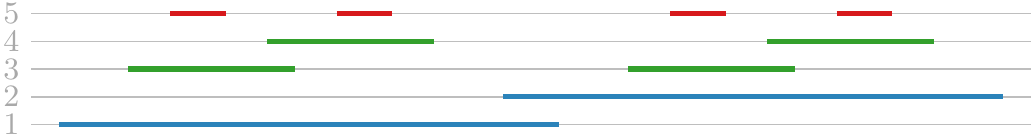}
  \caption{Instance for the proof of \cref{prop:containment-two-omega}
    for $n=3$: $|\cI_n|=3 \cdot 2^{n-1}-2=10$, $\omega(\cI_n)=n=3$, and
    $\chi(\cI_n)=2n-1=5$.}
  \label{fig:lower-bound-two-omega}
\end{figure}

\section{Coloring Containment Interval Graphs Is NP-Hard}
\label{sec:coloring-containment-hard}

In this section we show that it is NP-hard to color a containment
interval graph with a given number of colors.

\begin{theorem}
	\label{thm:containment-NP-hard}
	Given a set \cI of intervals and a positive integer $k$, it is
	NP-hard to decide whether $k$ colors suffice to color $\cC[\cI]$, that
	is, whether $\chi(\cC[\cI]) \le k$.
\end{theorem}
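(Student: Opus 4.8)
The plan is to reduce from \textsc{3-Sat} (a more structured variant such as monotone \textsc{Nae-3-Sat} might make the clause gadget slightly cleaner; one could also try precoloring extension of interval graphs as the source, but \textsc{3-Sat} makes the asymmetry we want most visible). From a formula $\varphi$ with variables $x_1,\dots,x_n$ and clauses $C_1,\dots,C_m$ I would build, in polynomial time, a set $\cI$ of intervals together with an integer $k$ such that $\cC[\cI]$ admits a proper $k$-coloring if and only if $\varphi$ is satisfiable. The construction rests on exactly two elementary facts about containment interval graphs, both already used implicitly in the proof of \cref{prop:containment-two-omega}: a chain of nested intervals $J_1 \supset J_2 \supset \dots \supset J_t$ forces $f(J_1) < f(J_2) < \dots < f(J_t)$, so surrounding a distinguished interval by a nested chain above it and a nested chain below it confines its color to a prescribed sub-range of $\{1,\dots,k\}$; and two overlapping intervals must receive distinct colors, so an overlapping pair whose colors have both been confined to a common two-element range $\{c,c+1\}$ implements a Boolean choice, namely which of the two gets the smaller color. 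A global \emph{scaffold} (one long nested chain) fixes the available palette to exactly $k$ colors, and every gadget is wrapped in its own nested \emph{frame} reserving for it a contiguous block of colors.

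For each variable $x_i$ I place two overlapping intervals $a_i$ and $\overline{a}_i$, framed so that $\{f(a_i),f(\overline{a}_i)\}$ is forced to be a fixed consecutive pair of colors; I read ``$x_i$ is true'' as ``$a_i$ got the smaller of the two''. Since a variable may occur in several clauses, a \emph{copy gadget} is needed: a chain of overlapping pairs propagating the state of $a_i$ to a fresh pair for every occurrence, with all copies kept in agreement by overlap edges. A clause $C_j=(\lambda_1\lor\lambda_2\lor\lambda_3)$ is encoded by nesting, inside a frame, the three literal-copy intervals for $\lambda_1,\lambda_2,\lambda_3$ together with an auxiliary nested chain whose length is chosen so that it just fits into the colors left free by the frame \emph{precisely when} at least one literal copy takes its ``true'' (smaller) color and thereby frees a color for the chain; if all three copies take their ``false'' (larger) colors, the chain is pushed past color $k$. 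This is the same one-dimensional packing-versus-covering effect as in \cref{prop:containment-two-omega}, where the copy of $\cI_{n-1}$ nested inside the \emph{higher}-colored of $\ell_n,r_n$ is forced upward; making the clause gadget overflow if and only if all three literals are false is the combinatorial core of the reduction.

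Correctness should then be routine: from a satisfying assignment, color each variable gadget accordingly, propagate along the copy gadgets, and fill all frames and auxiliary chains greedily from the outside inward, verifying that no clause chain exceeds $k$; conversely, from a proper $k$-coloring read off an assignment from the variable gadgets and argue, clause by clause, that a non-overflowing coloring forces at least one literal true. The part I expect to be the genuine obstacle — and the likely source of the ``technical subtleties'' faced by the analogous reduction for bidirectional interval graphs — is the geometric layout: the nested frames of different gadgets, the long scaffold intervals, the copy chains, and the clause gadgets all sit on one line, and one must guarantee that the \emph{only} containments and overlaps between intervals of distinct gadgets are the intended ones. I would handle this by giving each gadget a private slot on the line, letting every framing interval extend only within its own slot (so frames of different gadgets are pairwise disjoint), routing copy chains through narrow corridors, and — crucially — scaling interval lengths geometrically between nesting levels, exactly as the factor-$3^{n-1}$ growth does in \cref{prop:containment-two-omega}, so that intervals meant to contain others genuinely do while intervals meant only to overlap do not accidentally nest. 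Verifying these geometric invariants together with the per-gadget color accounting is where the bulk of the work lies.
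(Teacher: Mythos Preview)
Your outline shares its skeleton with the paper's reduction --- nested chains to confine colors to a range, an overlapping pair to encode a Boolean choice, and a clause implemented as a nested chain that overflows past~$k$ exactly when all three literals are false. The gap is not in the geometric layout you flag at the end but in how truth values reach the clause and how the clause extracts an OR from them. If each variable is pinned to its own two-element window $\{c_i,c_i+1\}$ and your copy gadget is a chain of overlapping pairs, then every copy of~$x_i$ stays in that same window. At a clause $C_j$ the three literal copies therefore sit in three (in general different) two-element windows; for the auxiliary chain to react to their colors at all it must be nested inside them, and then its first available color is $1+\max_r f(\lambda_r)$, which is governed by the \emph{worst} literal --- that is AND semantics, not OR. Making all windows equal does not help either: three pairwise-overlapping copies cannot receive three distinct colors from a two-element set, and three pairwise-disjoint copies do not interact with the chain. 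So the clause gadget as sketched does not implement disjunction; this is a missing idea, not merely missing bookkeeping.

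The paper's mechanism is different in kind. A variable gadget is a pair of tall overlapping nested chains (``Christmas trees'') whose longest intervals compete for color~$1$; the losing tree has every level shifted up by one. For clause~$C_i$, each literal gets an \emph{arm}: a single long interval starting just to the right of the level-$5i$ interval of the relevant tree and stretching all the way to the clause gadget. Because of the shift, color~$5i$ is available to an arm precisely when its literal is true; otherwise the arm must take at least~$5i+1$. The three arms of~$C_i$ pairwise overlap at the clause end and all contain the clause's Christmas tree of height $H-5i-2$, so the arms fit into the block $\{5i,5i+1,5i+2\}$ --- leaving the tree room at or below~$H$ --- if and only if at least one arm can take~$5i$. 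The OR thus comes from three mutually overlapping intervals competing for a shared block of three consecutive colors whose bottom color is conditionally accessible; that device is what your plan is missing.
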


\begin{proof}
	We describe a reduction from (exact) 3-\textsc{Sat}, i.e.,
	the satisfiability problem where every clause contains exactly three literals.
	Let $\varphi = C_1 \wedge C_2 \wedge \dots \wedge C_m$ be an
	instance of 3-\textsc{Sat} where,
	for each clause $C_i$ ($i \in [m]$),
	the literals are negated or unnegated variables from the set
	$\{x_1,x_2,\dots,x_n\}$, and let $H = 5 (m+1)$ be a threshold.

	Using $\varphi$, we construct in polynomial time
	a set of intervals (with pairwise distinct endpoints)
        such that the corresponding containment
	interval graph has a proper coloring with $H$ colors 
	if and only if $\varphi$ is satisfiable.  To this end,
	we introduce \emph{variable gadgets} and \emph{clause gadgets},
	which are sets of intervals representing
	the variables and clauses of $\varphi$, respectively.
	Our main building structure used in these gadgets is
	a \emph{Christmas tree}, %
        that is, an ordered set of intervals where each interval
        contains its successor; see, for example, the set of red
        intervals in \cref{fig:variable-gadget}.  Clearly, the
        intervals of a Christmas tree form a totally ordered clique
        and any proper coloring needs to observe this order.  In
        \cref{fig:clause-gadget}, Christmas trees are represented by
        trapezoids.  The \emph{height} of a Christmas tree is the
        number of intervals it consists of.
	
	Let $j \in [n]$.
	The variable gadget for~$x_j$ consists of two Christmas trees
        (formed by the red and gray intervals in
        \cref{fig:variable-gadget}) whose longest intervals overlap
        and, for each tree, of two additional intervals (green in
        \cref{fig:variable-gadget}).  These green intervals lie
        immediately to the left and to the right of the shortest
        interval in their tree.  The right green interval of the red
        tree overlaps the left green interval of the gray tree.        
	\Cref{fig:variable-gadget} depicts two representations of the
	same gadget for a variable~$x$, each with its own coloring of the
	intervals (encoded by the height of the intervals; see the
	numbers at the right side of the gray box).  The left
	representation with its coloring corresponds to assigning
	true to~$x$, the right representation corresponds to assigning
	false.
	\begin{figure}[tb]
		\centering
		\includegraphics[page=2]{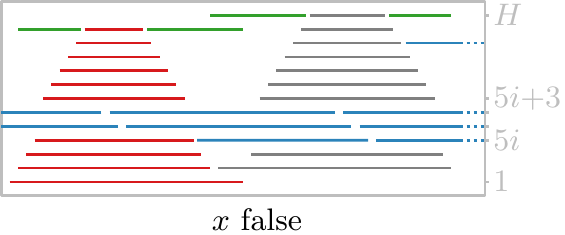}
		\hfill
		\includegraphics[page=1]{containment-variable-gadget}
		
		\caption{Variable gadget for the proof of
			\cref{thm:containment-NP-hard} in its two states.  The blue
			intervals with dots extend to the clause gadgets.  The topmost
			blue interval
			(starting immediately to the right of a gray interval) indicates
			that $x$ is part of a clause~$C_j$ with $j>i$; the blue interval
			(with a small gap) that starts immediately to the right of a red
			interval indicates that $\neg x$ is part of the clause~$C_i$.}
		\label{fig:variable-gadget}
	\end{figure}
	The height of the red tree is $H-1$ minus the number of occurrences
	of literals $x_{j'}$ and $\neg x_{j'}$ with $j'<j$ in~$\varphi$.
	The height of the gray tree is that of the red tree minus the number
	of occurrences of $\neg x_j$ in~$\varphi$.
	We say that $x_j$ is set to true if the bottom interval of
	the gray tree has color~1; otherwise we say that $x_j$ is set
        to false.

	For $i \in [m]$, the gadget for clause $C_i$ consists of a
	Christmas tree (light blue in \cref{fig:clause-gadget}) of height
	$H-(5i+2)=5(m-i)+3$.  All clause gadgets are placed to the
	right of all variable gadgets, in the order $C_1,\dots,C_m$
	from left to right.
        
	\begin{figure}[tb]
		\centering
		\includegraphics[page=1]{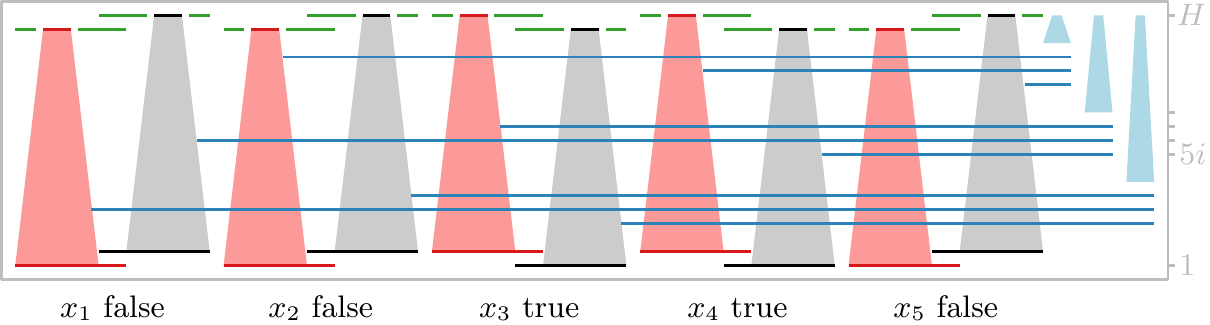}
		
		\bigskip
		
		\includegraphics[page=2]{containment-clause-gadget}
		\caption{Variable gadgets (red and gray) and clause gadgets (blue)
			for the 3-\textsc{Sat} instance
			$(\neg x_2 \vee \neg x_4 \vee x_5) \wedge (x_1 \vee \neg x_3 \vee x_4)
			\wedge (\neg x_1 \vee x_2 \vee x_3)$ with a fulfilling truth
			assignment (above) and a non-fulfilling assignment (below).  Note
			that the latter uses one color more (is one level higher).}
		\label{fig:clause-gadget}
	\end{figure}
	
	The key idea to transport a Boolean value from a variable gadget
	to a clause gadget is to add, for each occurrence of a literal
	$\ell_j \in \{x_j, \neg x_j\}$ in a clause~$C_i$, an ``arm''
	(blue intervals in \cref{fig:variable-gadget,fig:clause-gadget})
	that ends to the right of the clause gadget (the light blue
        Christmas tree) corresponding to~$C_i$ and starts
	immediately to the right of the $5i$-th interval of the gray tree
	(if $\ell_j=x_j$) or of the red tree (if $\ell_j=\neg x_j$)
	corresponding to~$\ell_j$.  The arm is represented by a sequence of
	intervals that are separated by a small gap within each
	Christmas tree of each clause gadget passed by the arm
	(such that, for any two arms, their gaps are disjoint
	and the resulting intervals do not contain each other).
	Assuming that the total number of colors is~$H$,
	two intervals of the same arm that are separated by a gap
	need to get the same color because, at the gap, $H-2$
	colors are occupied by other intervals and the one remaining
	``wrong'' color is blocked due to the green intervals of the
        variable gadgets; see \cref{fig:variable-gadget,fig:clause-gadget}.
	The green intervals are contained by the blue intervals of
        the arms and need to get color~$H$ or~$H-1$.
	
	If there is a satisfying truth assignment for $\varphi$, then
        there is a proper coloring that colors the variable gadgets
        such that they represent this truth assignment.  As for every
        clause~$C_i$, at least one of its literals in $C_i$ is true,
	the corresponding arm can use color~$5i$.
	Then, the arms of the other literals that occur in~$C_i$
	can use colors $5i+1$ and $5i+2$.  This allows the
	light blue Christmas tree representing clause~$C_i$,
	which has height $H - 5i - 2$ and is contained
	in the rightmost interval of each of these arms,
	to use the colors $\{5i+3,5i+4,\dots,H\}$.
	
	Now suppose for a contradiction that there is no satisfying
	truth assignment for $\varphi$, but that there is a
	proper coloring with $H$ colors.
	This coloring assigns a truth value to each variable gadget
	(depending on whether the bottommost interval of
	the red or the gray tree has color~1).
	Clearly, there is a clause $C_i$ in $\varphi$ that
	is not satisfied by this truth assignment.
	Hence, none of the arms connecting the clause gadget of~$C_i$
	with its three corresponding variable gadgets
	can use color~$5i$.  Hence
	they must use colors $5i+1$, $5i+2$, and $5i+3$ (or higher).
	This forces the (blue) Christmas tree representing
	clause~$C_i$ to use colors $\{5i+4,\dots,H,H+1\}$.
	
	Thus, a proper
	coloring with $H$ colors exists if and only if
	$\varphi$ is satisfiable.
\end{proof}

\section{Coloring Bidirectional Interval Graphs Is NP-Hard}
\label{sec:bidirectional-NP-hard}

In this section we show that it is NP-hard to color a bidirectional
interval graph with a given number of colors.
For a set \cI of intervals and a function $o$ that maps every
interval in~\cI to an orientation (left-going or right-going),
let $\cB[\cI,o]$ be the bidirectional interval
graph induced by~\cI and~$o$.

\begin{theorem}
  \label{thm:bidirectional-NP-hard}
  Given a set \cI of intervals with orientations $o$ and a positive
  integer $k$, it is NP-hard to decide whether $k$ colors suffice to
  color $\cB[\cI,o]$, that is, whether $\chi(\cB[\cI,o]) \le k$.
\end{theorem}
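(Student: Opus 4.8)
The proof will be a polynomial-time reduction from 3-\textsc{Sat} that mirrors the one for \cref{thm:containment-NP-hard}, but realizes the containment arcs there by \emph{overlap} arcs between intervals of matching orientation here. The key observation is that in a containment interval graph a Christmas tree (nested intervals, all oriented so that arcs point inward) forces a strictly increasing chain of colors; in a bidirectional interval graph we can reproduce exactly this effect with a \emph{staircase} of pairwise-overlapping intervals that all share one orientation, so that consecutive ones are joined by an arc directed ``downward'' along the staircase. So I would first set up, for an instance $\varphi = C_1 \wedge \dots \wedge C_m$ over variables $x_1,\dots,x_n$, a threshold $H = \Theta(m)$ and build bidirectional analogues of the two gadget types.

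\textbf{Variable gadgets.} For each variable $x_j$ I would use two oppositely-oriented staircases whose ``tops'' overlap (playing the role of the overlapping longest intervals of the red and gray Christmas trees in \cref{fig:variable-gadget}), together with auxiliary ``blocker'' intervals that are the bidirectional stand-ins for the green intervals: their job is to occupy the two highest colors $H-1$ and $H$ at the spots where an arm of a literal enters, so that an arm can only propagate one of the two ``correct'' colors. A truth assignment corresponds to which of the two staircases has its bottom interval colored~$1$; the heights of the two staircases are calibrated (as in the containment proof) by the number of literal occurrences with smaller index so that the color budget is tight. The main subtlety, and the one I expect to cost the most work, is that in the bidirectional model adjacency and arc-direction are both forced by geometry and orientation simultaneously: I cannot freely choose to make an overlapping pair an edge rather than an arc, nor choose the arc's direction independently of which staircase it belongs to. So I would have to route every arm through the clause gadgets using right-going (say) intervals only, check that every overlap an arm creates with a staircase, a blocker, or another arm has the \emph{desired} type (edge vs.\ arc) and the \emph{desired} orientation for free, and introduce small gaps between arm-pieces exactly where an unwanted arc would otherwise appear — this is where the proof is ``technically somewhat more challenging'' than \cref{thm:containment-NP-hard} as the paper warns.

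\textbf{Clause gadgets.} For each clause $C_i$ I would place, to the right of all variable gadgets and in left-to-right order, a staircase of height $H - (5i+2)$ (matching the light-blue Christmas tree of \cref{fig:clause-gadget}) that is contained in — i.e., overlapped appropriately by, with arcs of the right sense — the last pieces of the three arms feeding $C_i$. The counting argument is then identical in spirit to the containment case: if $\varphi$ is satisfiable, pick the assignment, give the satisfied literal's arm the low color $5i$, the other two arms colors $5i+1, 5i+2$, and the clause staircase fits into colors $\{5i+3,\dots,H\}$; conversely, if some clause $C_i$ is unsatisfied then all three of its arms are pushed up to colors $\ge 5i+1$, forcing the clause staircase to need color $H+1$. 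I would conclude that $\chi(\cB[\cI,o]) \le H$ iff $\varphi$ is satisfiable, and since the whole construction has $O(nm)$ intervals it is polynomial, giving NP-hardness.

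\textbf{Main obstacle.} The heart of the difficulty is not the arithmetic of $H$ or the satisfiability equivalence — those transfer from \cref{thm:containment-NP-hard} almost verbatim — but verifying that a planar-on-the-line arrangement of all these staircases, blockers, and arm-pieces can be chosen with pairwise-distinct endpoints so that \emph{every} intersecting pair gets precisely the edge-or-arc relation the argument needs, purely as a consequence of interval overlap plus a globally consistent assignment of orientations. In particular I must ensure no two intervals I want to be ``unrelated in direction'' accidentally share an orientation and overlap, and that the blocker intervals overlap the arm-pieces (to block colors $H-1,H$) without overlapping each other in a way that over-constrains the count; handling these local geometric case checks carefully is the bulk of the real proof.
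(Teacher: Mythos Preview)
Your overall strategy—replace Christmas trees by staircases and redo the containment reduction—is the right starting point, but you are missing the paper's key structural change: the reduction is from \textsc{Monotone 3-Sat}, not from 3-\textsc{Sat}. In the bidirectional model an arm's orientation, together with the orientation of each staircase it meets, determines for every overlap whether you get an arc or an edge and in which direction; there is no freedom left. With all clause gadgets on one side, an arm for a positive literal~$x_j$ (hooking into the left-going gray staircase) and an arm for~$\neg x_j$ (hooking into the right-going red staircase) would have to behave differently at every staircase they traverse, and no single arm orientation yields the correct relation in both cases. The paper instead places the gadgets of all-negative clauses to the right of the variable gadgets (left-going arms) and the gadgets of all-positive clauses to the left (the mirror image, right-going arms); monotonicity guarantees that the three arms of any clause head the same way with a common orientation. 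Your sentence ``route every arm \dots\ using right-going (say) intervals only'' is exactly where the general 3-\textsc{Sat} plan runs aground.

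There is a second idea you do not anticipate. In the paper's construction an arm is \emph{not} split at foreign variable gadgets (it fully contains the same-orientation staircase there, which yields edges rather than arcs), but it \emph{is} split under every clause staircase it passes, and the bridging piece carries the opposite orientation so as to avoid arcs with that staircase and with the other arms. Consequently, the two pieces of an arm around such a gap are joined only by an edge, so nothing intrinsic forces them to share a color; the paper repairs this by adding long blocker intervals (the orange ones) at each clause gadget, filling every color not used by the arms or the clause staircase at that gap. This in turn creates an extra case in the ``no'' direction that your sketch omits: if some arm uses two different colors across a gap, then the size-$(H{-}1)$ clique there (clause staircase, other arms, blockers) already needs color~$H{+}1$, regardless of whether the terminal clause is satisfied.
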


\begin{proof}
  We use the same ideas as in the proof of
  \cref{thm:containment-NP-hard}, but now
  we reduce from \textsc{Monotone 3-Sat},
  the version of \textsc{3-Sat} where every clause
  contains only negated or only unnegated variables as literals.
  For an overview, see
  \cref{fig:bidi-variable-gadget,fig:bidi-clause-gadget}.  Let
  $\varphi = C_1 \wedge C_2 \wedge \dots \wedge C_m$ be the given
  instance of \textsc{Monotone 3-Sat} with variables
  $\{x_1, x_2, \dots, x_n\}$.  As before, let $H = 5(m+1)$ be the
  number of colors sufficient for coloring a yes-instance.
  
  We now construct variable and clause gadgets by specifying
  a set~\cI of intervals with orientations~$o$.
  Our intervals have pairwise distinct endpoints.  Our main building
  structures are \emph{left- and right-going staircases}.
  A left-going staircase (gray in
  \cref{fig:bidi-variable-gadget,fig:bidi-clause-gadget}) is an
  ordered set of left-going intervals that share a common point
  and whose left and right endpoints are in the order of the set.
  A right-going staircase is symmetric (red in
  \cref{fig:bidi-variable-gadget,fig:bidi-clause-gadget}).
  Observe that staircases in (bi)directional interval graphs behave
  like Christmas trees in containment graphs: they form totally
  ordered cliques.
  The \emph{height} of a staircase is the number of its intervals.
  In \cref{fig:bidi-clause-gadget}, we draw staircases
  as parallelograms and we indicate left- and
  right-going intervals by arrow heads.
  
  \begin{figure}[tb]
  	\centering
  	\includegraphics[page=2]{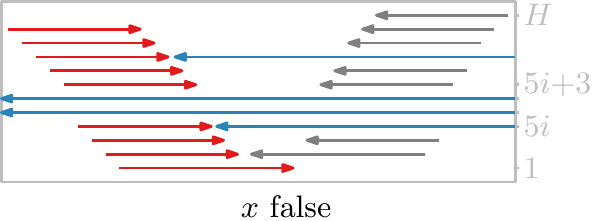}
  	\hfill
  	\includegraphics[page=1]{bidi-variable-gadget}
  	
  	\caption{Variable gadget for the proof of
  		\cref{thm:bidirectional-NP-hard} in its two states.
  		Intervals directions are indicated by arrow heads.
  		The blue intervals extend (to the right)
  		to the clause gadgets of only negated variables.
  		The two blue arrow heads starting next to the red intervals
  		indicate that the clause~$C_i$ and a clause~$C_j$ with $j>i$
  		contain the literal $\neg x$.}
  	\label{fig:bidi-variable-gadget}
  \end{figure}
  
  \begin{figure}[tb]
  	\centering
  	\includegraphics[page=1,scale=.92]{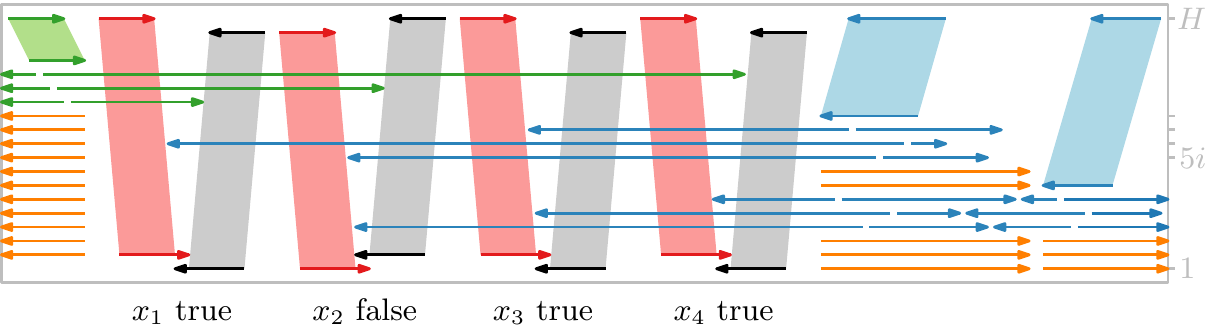}
  	
  	\medskip
  	
  	\includegraphics[page=2,scale=.92]{bidi-clause-gadget}
  	\caption{Variable gadgets (red and gray) and clause gadgets
                (light green and light blue)
  		for the \textsc{Monotone 3-Sat} instance
  		$(x_1 \vee x_2 \vee x_4) \wedge (\neg x_1 \vee \neg x_2 \vee \neg x_3)
  		\wedge (\neg x_2 \vee \neg x_3 \vee \neg x_4)$ with a fulfilling truth
  		assignment (top) and a non-fulfilling assignment (bottom),
  		which use $H$ and $H + 1$ colors, respectively.
  		Interval directions are indicated by arrow heads.}
  	\label{fig:bidi-clause-gadget}
  \end{figure}

  Let $i \in [m]$.  If the clause~$C_i$ has only negated literals,
  we again have three ``arms'' in~$C_i$ starting
  to the right of the $5i$-th interval of the red staircase
  of the three corresponding variables;
  see the blue intervals in \cref{fig:bidi-variable-gadget,fig:bidi-clause-gadget}.
  The intervals of these arms are left-going
  and they are not split by a gap at the other variable gadgets
  this time because now we need to contain the left-going
  intervals of the staircases to have edges instead of arcs
  in~$\cB[\cI,o]$.  The arms end below a left-going light blue staircase
  (see \cref{fig:bidi-clause-gadget} on the upper right side)
  of height $5(m-i)+3$ whose maximum color is~$H$
  if and only if none of the corresponding arms
  gets a color greater than $5i + 2$.
  
  Note that we need to avoid arcs between the arms.
  Therefore, we let every arm have a gap below
  each blue staircase that it passes.
  These gaps do not overlap and
  their order is inverse to the order of the left endpoints
  of the involved intervals.
  We continue the arm with a right-going interval
  (to avoid an arc with the blue staircase and the other arms)
  ending to the right of the blue staircase,
  where we continue again with a left-going interval.
  Consider such an arm with intervals $I$ and $I'$
  (going in different directions) around a gap.
  At this gap, it is important that
  no color smaller than the color of $I$ is available for $I'$,
  forcing $I'$ to also get the color of~$I$.
  Hence, we add long left-going intervals blocking
  every color not occupied by an arm or the blue staircase;
  see the orange intervals in \cref{fig:bidi-clause-gadget}.
  
  For every clause with only unnegated literals,
  we use the same construction but mirrored
  (connecting to the gray staircases);
  see the green intervals and light green staircases
  depicted in the top left corner of \cref{fig:bidi-clause-gadget}.
  
  We now have the same conditions as
  in the proof of \cref{thm:containment-NP-hard}:
  If there is a satisfying truth assignment for $\varphi$,
  there is a proper coloring, which colors
  the variable gadgets such that they represent this truth assignment.
  Again, for every $i \in [m]$, clause~$C_i$ contains at least one
  literal set to true.  Hence, the corresponding arm can get
  color~$5i$, and the other two arms and the (light blue or light
  green) staircase of~$C_i$ get colors $\{5i+1, 5i+2, \dots, H\}$.
  
  If there is no satisfying truth assignment for $\varphi$,
  then there is a clause~$C_i$ none of whose arms
  (as a whole) can get color~$5i$.
  If each arm occupies only one layer,
  then an interval of the clause gadget of~$C_i$ requires color $H+1$.
  If there is an arm occupying more than one layer,
  then below a clause gadget of some clause $C_{i'}$,
  there are two colors blocked by this arm (at one of its gaps).
  Then, however, an interval of the clique of size $H-1$ at this gap
  belonging to the (light blue or light green) staircase of~$C_{i'}$,
  to the other arms, or to the orange ``blocker'' intervals
  requires color~$H+1$; see \cref{fig:bidi-clause-gadget}
  for such an example.
\end{proof}

\section{Coloring General Mixed Interval Graphs}
\label{sec:coloring-general}

In this section we consider a further generalization of mixed interval
graphs.  We are dealing with an interval graph $G$ whose edges can be
arbitrarily oriented (or stay undirected).  In other words, the edge
directions are not related to the geometry of the intervals.

Observe that a proper coloring of $G$ exists if and only if $G$ does
not contain a directed cycle.  Let $\chi(G)$ denote the minimum number
of colors in a proper coloring of $G$, if it exists, or $\infty$ otherwise.
We point out that the existence of a directed cycle can be determined
in polynomial time (using, for example, depth-first search).

Note that clearly we have $\omega(G) \leq \chi(G)$.
However, there is another parameter that enforces a large chromatic
number even in sparse graphs.
A \emph{directed path} (of length $t$) in $G$ is a sequence of
vertices $\langle v_1,v_2,\dots,v_{t+1} \rangle$, such that, for each
$i \in [t]$, the arc $(v_i,v_{i+1})$ exists.
Let $\lambda(G)$ denote the length of a longest directed
path in~$G$.

Note that the vertices in a directed path receive pairwise distinct
colors in any proper coloring.  Thus we have $\chi(G) \ge \lambda(G)+1$,
and consequently $\chi(G) \ge \max \{\omega(G), \lambda(G)+1\}$.

\begin{theorem}
  \label{thm:mixed-upper-bound}
  Let $G$ be a mixed interval graph without directed cycles.
  Then $\chi(G) \le (\lambda(G)+1) \cdot \omega(G)$.
\end{theorem}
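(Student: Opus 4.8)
The plan is to combine an interval-graph ordering with a layering induced by longest directed paths. For each vertex $v$ of $G$, let $\ell(v)$ denote the length of a longest directed path in $G$ ending at $v$; since $G$ has no directed cycles this is well-defined, $\ell(v) \in \{0, 1, \dots, \lambda(G)\}$, and whenever $(u,v)$ is an arc we have $\ell(u) < \ell(v)$. This partitions $V(G)$ into at most $\lambda(G)+1$ layers $L_0, L_1, \dots, L_{\lambda(G)}$, where $L_t = \{v : \ell(v) = t\}$. Crucially, each layer $L_t$ spans only arcs in one direction across layer boundaries, and within a single layer the induced subgraph $G[L_t]$ contains no arcs at all (an arc inside a layer would connect two vertices of equal $\ell$-value, contradicting $\ell(u) < \ell(v)$). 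Hence $G[L_t]$ is an ordinary (undirected) interval graph, so it is perfect and can be properly colored with $\chi(G[L_t]) = \omega(G[L_t]) \le \omega(G)$ colors.

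Next I would assemble these per-layer colorings into a global coloring. Reserve a fresh block of $\omega(G)$ consecutive colors for each layer: color layer $L_t$ using colors in the range $\{t\cdot\omega(G)+1, \dots, (t+1)\cdot\omega(G)\}$, obtained by taking the proper coloring of $G[L_t]$ with colors $\{1,\dots,\omega(G[L_t])\}$ and shifting it up by $t\cdot\omega(G)$. This uses at most $(\lambda(G)+1)\cdot\omega(G)$ colors in total. It remains to verify that the result is a proper coloring of $G$. For an edge $\{u,v\}$: if $u,v$ lie in the same layer, the within-layer coloring already separates them; if they lie in different layers, their color blocks are disjoint, so their colors differ. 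For an arc $(u,v)$: we have $\ell(u) < \ell(v)$, so the color of $u$ lies in a strictly lower block than the color of $v$, hence $f(u) < f(v)$, as required.

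I do not expect a serious obstacle here; the argument is essentially the standard "layer by longest-path level, color each layer with an interval-graph coloring" template. The one point that needs a clean statement is that $G[L_t]$ is genuinely arc-free (so that perfectness of interval graphs applies and gives $\omega(G)$ colors rather than something weaker) — this follows immediately from the definition of $\ell$, but it is the step on which the whole bound hinges, so I would spell it out explicitly. A minor remark worth including is that this constructive coloring, combined with the trivial lower bound $\chi(G) \ge \max\{\omega(G), \lambda(G)+1\}$ noted just before the theorem, yields the $\min\{\omega(G), \lambda(G)+1\}$-approximation claimed in the introduction: the coloring uses at most $(\lambda(G)+1)\cdot\omega(G)$ colors, which is within a factor $\lambda(G)+1$ of $\omega(G)$ and within a factor $\omega(G)$ of $\lambda(G)+1$.
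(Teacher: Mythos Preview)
Your proposal is correct and follows essentially the same approach as the paper: layer the vertices by longest-directed-path level (equivalently, iteratively strip sources from the arc-only subgraph), then spend a block of $\omega(G)$ colors on each of the $\lambda(G)+1$ layers. The one minor difference is that the paper takes a single global optimal coloring $c$ of the underlying interval graph $U(G)$ and sets $f(x)=\ell(x)\cdot\omega(G)+c(x)$, rather than coloring each layer separately; this sidesteps the need to argue that $G[L_t]$ is arc-free (any edge of $G$ is an edge of $U(G)$, so $c$ already separates its endpoints regardless of layer), but the two variants are interchangeable and yield the same bound.
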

\begin{proof}
  Let $V$ denote the vertex set of $G$.  Let $G^{\to}$ be the graph
  obtained from $G$ by removing all edges.  Clearly,
  $G^{\to}$ is a DAG.  We partition $V$ into \emph{layers}
  $L_0,L_1,\ldots$ as follows.  The set~$L_0$ consists of the vertices
  that are sources in $G^{\to}$, i.e., they do not have incoming
  arcs.  Then, for $i=1,2,\ldots$, we iteratively define~$L_i$ to be
  the set of sources in $G^{\to} \setminus \bigcup_{j=0}^{i-1} L_j$.
  Note that $\lambda(G) = \max \{ i \colon L_i \ne \emptyset \}$.
  For $x \in V$, let $\ell(x) \in \{0,\dots,\lambda(G)\}$ denote the
  unique~$i$ such that $x \in L_i$.

  Recall that the underlying undirected graph of $G$,
  $U(G)$, is an (undirected) interval graph, and thus
  $\chi(U(G))=\omega(U(G))=\omega(G)$.  Let $c \colon V \to [\omega(U(G))]$
  be an optimal proper coloring of~$U(G)$.

  Now we define a coloring $f$ of $G$: for~$x \in V$, 
  let $f(x) = \ell(x) \cdot \omega(G)+c(x)$.
  Note that $1 \le f(x) \le (\lambda(G)+1) \cdot \omega(G)$.  We claim
  that $f$ is a proper coloring.

  Consider an edge $\{x,y\}$. As this is also an edge in $U(G)$,
  we obtain that $c(x) \neq c(y)$, and so $f(x) \neq f(y)$.  Now
  consider an arc $(x,y)$.  Its existence
  implies that $\ell(x) < \ell(y)$, and thus $f(x) < f(y)$.
\end{proof}

For some instances, the above bound is asymptotically tight.

\begin{proposition}
  \label{prop:mixed-lower-bound}
  There is an infinite family $(G_k)_{k \ge 1}$ of mixed interval
  graphs with $|V(G_k)|=2k^2$, $\lambda(G_k)=k-1$, $\omega(G_k)=2k$,
  and $\chi(G_k)=(k+1) \cdot k=(\lambda(G_k)+2)\cdot\omega(G_k)/2$.
\end{proposition}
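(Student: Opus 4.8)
The plan is to build $G_k$ from two interval ``chains'' that meet in a common terminal clique. For a parameter $k$, introduce groups of intervals $A_1,\dots,A_k$ and $B_1,\dots,B_k$, where each group consists of $k$ pairwise intersecting intervals, so each group is a $k$-clique. Place the $A$-groups so that, from left to right, $A_1,\dots,A_k$ occur in this order, $A_i$ meets exactly $A_{i-1}$ and $A_{i+1}$, and $A_i\cup A_{i+1}$ is a clique of size $2k$ (for instance, let all intervals of $A_i$ contain a tiny window around a point $c_i$, with windows of consecutive groups abutting and windows of distance-$2$ groups kept apart). Place the $B$-groups symmetrically, increasing to the left, and finally make $A_k$ and $B_k$ overlap so that $A_k\cup B_k$ is also a clique of size $2k$, while no interval of $A_{k-1}$ meets $B_k$ and no interval of $B_{k-1}$ meets $A_k$ (so that no $3k$-clique is created). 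All adjacencies inside a single group, and all adjacencies within $A_k\cup B_k$, are (undirected) edges; for $i<k$, every pair $(a,a')$ with $a\in A_i$, $a'\in A_{i+1}$ becomes the arc $(a,a')$, and symmetrically for the $B$-groups. Then $|V(G_k)|=2k^2$; the digraph formed by the arcs is acyclic, being a disjoint union of two $k$-layered DAGs (complete between consecutive layers), so $\lambda(G_k)=k-1$; and since no point of the line lies in more than $2k$ intervals, $\omega(G_k)=2k$.

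For the lower bound $\chi(G_k)\ge k^2+k$, I would fix an arbitrary proper coloring $f$ and propagate colors along a chain. Since $A_i$ is a $k$-clique, $\max f(A_i)\ge \min f(A_i)+k-1$; since every pair of $A_i\times A_{i+1}$ is an arc from $A_i$ to $A_{i+1}$, every vertex of $A_{i+1}$ gets a color larger than $\max f(A_i)$, hence $\min f(A_{i+1})\ge \min f(A_i)+k$. Iterating from $\min f(A_1)\ge 1$ gives $\min f(A_k)\ge (k-1)k+1$, and symmetrically $\min f(B_k)\ge (k-1)k+1$. But $A_k\cup B_k$ is a clique of size $2k$, so its vertices receive $2k$ distinct colors, all at least $(k-1)k+1$; consequently the largest of them is at least $(k-1)k+1+(2k-1)=k^2+k$.

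For the matching upper bound I would color explicitly: assign the vertices of $A_i$ the $k$ distinct colors $\{(i-1)k+1,\dots,ik\}$ and the vertices of $B_i$ the $k$ distinct colors $\{ik+1,\dots,(i+1)k\}$. Distinctness within a group is immediate; for consecutive groups the two color ranges are disjoint, which simultaneously handles the cliques $A_i\cup A_{i+1}$ and $B_i\cup B_{i+1}$ and the arc conditions $A_i\to A_{i+1}$ and $B_i\to B_{i+1}$; and $A_k\cup B_k$ receives the $2k$ distinct colors $\{(k-1)k+1,\dots,k^2\}\cup\{k^2+1,\dots,k^2+k\}$. The largest color used is $k^2+k$, so $\chi(G_k)=k^2+k=(k+1)k=(\lambda(G_k)+2)\,\omega(G_k)/2$, as claimed.

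The step I expect to be the real obstacle is the design choice itself rather than the three verifications above: the ``obvious'' construction — a single chain of $2k$-cliques overlapping in half their vertices — only forces the ``upper half'' of the last clique to receive large colors, so the propagation argument then yields merely $\chi\ge k^2$; one genuinely needs two independent chains feeding into a shared terminal $2k$-clique so that all $2k$ of its colors are pushed upward. The remaining point that needs care (but is routine) is the geometric realization: consecutive groups must overlap enough to form a $2k$-clique, yet every pair of non-consecutive groups — in particular $A_{k-1}$ and $B_k$, and $A_k$ and $B_{k-1}$ — must be entirely disjoint, so that $\omega(G_k)$ stays exactly $2k$.
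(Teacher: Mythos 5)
Your construction and proof are correct, and the core idea is the same as the paper's: two chains of $k$ groups (each group a $k$-clique), with complete arcs between consecutive groups of a chain, meeting in a shared $2k$-clique at the end; the chromatic lower bound is obtained by propagating colors along the chains exactly as you do. There is, however, one noteworthy difference, and it is in your favor. You make every pair across the terminal clique $A_k\cup B_k$ an \emph{undirected edge}, whereas the paper orients every pair $(I,I')\in\cI_{k,k}\times\cI_{k,k}'$ as an \emph{arc}. Your choice is what the stated parameters actually require: with arcs across the terminal clique, the directed path $\cI_{k,1}\to\cdots\to\cI_{k,k}\to\cI_{k,k}'$ has length $k$, so $\lambda(G_k)=k$ rather than the claimed $k-1$, and then $(\lambda+2)\omega/2=(k+2)k=k^2+2k$ would no longer equal $\chi(G_k)=k^2+k$. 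The paper only asserts ``it is easy to see that $G_k$ has the desired properties,'' so this slip in the construction is easy to overlook; your version spells out the verifications and, in doing so, quietly corrects it. The remainder of your argument---the $k$-clique bound inside each group, the strict color jump of at least $k$ between consecutive groups, the $2k$ pairwise-distinct ``fresh'' colors forced onto $A_k\cup B_k$, and the explicit $(k^2+k)$-coloring for the upper bound---is correct and supplies exactly the detail the paper omits.
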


\begin{proof}
  Let $\cI_k=\cI_{k,1} \cup \cI_{k,2} \cup \dots \cup \cI_{k,k}$ be a set of
  $k^2$ intervals defined as follows; see \cref{fig:mixed-example}
  for~$\cI_4$.  For $i \in [k]$, let $\cI_{k,i}$ be a multiset that
  contains $k$ copies of the interval $[6i,6i+8]$.  Similarly,
  let $\cI_k'=\cI_{k,1}' \cup \cI_{k,2}' \cup \dots \cup \cI_{k,k}'$
  be a set of $k^2$ intervals such that $\cI_k'$ is the image of
  mirroring $\cI_k$ at the point $x=6k+7$.  Note that, for
  $i \in [k-1]$, every interval in $\cI_{k,i}$ intersects every
  interval in~$\cI_{k,i+1}$ and every interval in $\cI_{k,i}'$
  intersects every interval in~$\cI_{k,i+1}'$.  Additionally, every
  interval in $\cI_{k,k}$ intersects every interval in $\cI_{k,k}'$.  

  Let $G_k$ be a mixed interval graph for the set~$\cI_k\cup\cI_k'$.
  We direct the edges of~$G_k$ as follows.  Let $\{I,I'\}$
  be a pair of intervals in $\cI_k \cup \cI_k'$ that intersect each other.
  If~$I$ and~$I'$ are copies of the same interval, then $\{I,I'\}$ is an
  edge of~$G_k$.  Otherwise, $(I,I')$ is an arc of~$G_k$ if $\{I,I'\}
  \subseteq \cI_k$ and~$I$ lies further to the left than~$I'$, if
  $\{I,I'\} \subseteq \cI_k'$ and~$I$ lies further to the right
  than~$I'$, or if $(I,I') \in \cI_{k,k} \times \cI_{k,k}'$. 
  It is easy to see that $G_k$ has the desired properties.
\end{proof}

Note that the mixed interval graphs that we constructed in the proof
above are even {\em directional} interval graphs.

\begin{figure}[tb]
  \centering
  \includegraphics{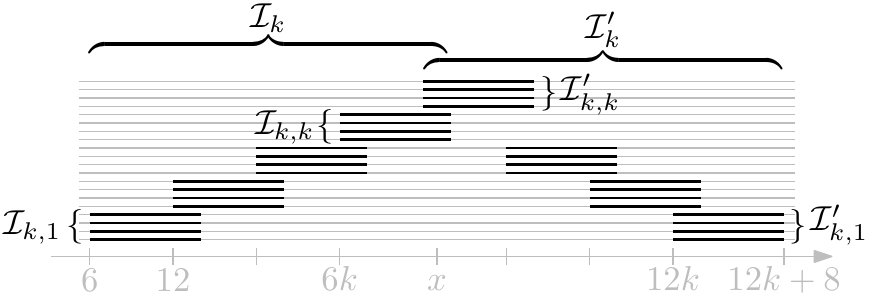}
  \caption{For any $k \ge 1$, the set $\cI_k\cup\cI_k'$ of
    intervals gives rise to a mixed interval graph~$G_k$ with
    $2k^2$ vertices, $\lambda(G_k)=k-1$, $\omega(G_k)=2k$, and
    $\chi(G_k)=(k+1) \cdot k=(\lambda(G_k)+2)\cdot\omega(G_k)/2$.}
  \label{fig:mixed-example}
\end{figure}

\section{Open Problems}

The obvious open problems are improvements to the results in
\cref{tab:results}, in particular: Is there a constant-factor
approximation algorithm for coloring general mixed interval graphs?
For applications in graph drawing, a better-than-2
approximation for coloring bidirectional interval graphs is of
particular interest.

Is there a linear-time recognition algorithm for directional or
containment interval graphs?
Is there a polynomial-time recognition algorithm for bidirectional
interval graphs?  

Using a reduction from \textsc{Max-3-SAT} instead of \textsc{3-SAT},
it may be possible to adjust our NP-hardness proofs in order to show
APX-hardness.  To this end, the difference in the number of colors
needed to color a yes-instance and the number of colors needed to
color a no-instance would have to be proportional to the number of
clauses that cannot be satisfied.  We were not able to force such a
large difference, hence we leave the APX-hardness of (or the existence
of a PTAS for) coloring containment and birectional interval graphs
open.

\subsubsection*{Acknowledgments.}

We are indebted to Krzysztof Fleszar, Zbigniew Lonc, Karolina Okrasa,
and Marta Piecyk for fruitful discussions.
Additionally, we acknowledge the welcoming and productive atmosphere
at the workshop Homonolo 2022, where some of the work was done.

\bibliography{mixed-interval-graphs}

\vfill\pagebreak\appendix\section*{Appendix}

\section{Recognition of Containment Interval Graphs}
\label{sec:recognition-containment}

Here we present a recognition algorithm for containment interval graphs.
Given a mixed graph~$G$, our algorithm decides whether~$G$ is a containment interval graph.
If it is, the algorithm additionally constructs a set~\cI of intervals representing~$G$, \ie, with $\cC[\cI]$ isomorphic with~$G$.
The algorithm works in three phases.
In the first phase, a \PQ-tree representing the interval representations of the underlying undirected graph~$U(G)$ of~$G$ is constructed.
In the second phase, the algorithm carefully selects a rotation of the \PQ-tree.
This corresponds to fixing the order in which the maximal cliques appear in the interval representation of~$U(G)$.
This almost fixes the interval representation.
In the third phase, the endpoints of the intervals are perturbed so that the edges and arcs of~$G$ are represented correctly.
We achieve this by reducing our problem to that of finding a two-dimensional realizer of a partial order.

The main result of this section is the following theorem.
\ThmRecognition*
\label{thm:recognition*}

In \cref{sec:recognition-containment-pq}, we introduce the necessary
machinery of \PQ-trees.  Details of the second and the third phase are
described in \cref{sec:recognition-containment-rotation,sec:recognition-containment-endpoints}, respectively.

\subsection{\MPQ-Trees}
\label{sec:recognition-containment-pq}
For a set of pairwise intersecting intervals on the real line, let the \emph{clique point} be the leftmost point on the real line that lies in all the intervals.
Given an interval representation of an interval graph~$G$, we get a linear order of the maximal cliques of~$G$ by their clique points from left to right.
Booth and Lueker~\cite{LuekerB79} showed that a graph~$G$ is an interval graph if and only if the maximal cliques of~$G$ admit a \emph{consecutive arrangement}, \ie, a linear order such that, for each vertex~$v$, all the maximal cliques containing~$v$ occur consecutively in the order.
They have also introduced a data structure called \PQ-tree that encodes all possible consecutive arrangements of~$G$.
We present our algorithm in terms of modified \PQ-trees (\MPQ-trees, for short) as described by Korte and M{\"o}hring~\cite{KorteM85,KorteM89}.

An \emph{\MPQ-tree}~$T$ of an interval graph~$G$ is a rooted, ordered tree with two types of nodes: \PQP-nodes and \PQQ-nodes, joined by links.
Each node can have any number of children and a set of at least two consecutive links joining a \PQQ-node~$x$ with some (but not all) of its children is called a \emph{segment} of~$x$.
Further, each vertex~$v$ of~$G$ is assigned either to a \PQP-node, or to a segment of some \PQQ-node.
Based on this assignment, we \emph{store}~$v$ in the links of~$T$.
If~$v$ is assigned to a \PQP-node~$x$, we store~$v$ in the link just above~$x$ in~$T$ (adding a dummy link above the root of~$T$).
If~$v$ is assigned to a segment of a \PQQ-node~$x$, we store~$v$ in each link of the segment.
For a link $\{x,y\}$, let~$S_{xy}$ denote the set of vertices stored in $\{x,y\}$.
We say that~$v$ is \emph{above} (\emph{below}, resp.) a node~$x$ if~$v$ is stored in any of the links on the upward path (in any of the links on some downward path, resp.) from~$x$ in~$T$.
We write $A^T_x$ ($B^T_x$, resp.) for the set of all vertices of~$G$ that are above (below, resp.) node~$x$.
Observe that every vertex assigned to a \PQP-node~$x$ is above~$x$, and every vertex assigned to a segment of a \PQQ-node~$x$ is below~$x$.

The \emph{frontier} of~$T$ is the sequence of the sets $A^T_x$, where~$x$ goes through all leaves in~$T$ in the order of~$T$.
Every node of~$T$ with at least two children is \emph{branching}.
Given a \PQQ-node~$x$ of an \MPQ-tree~$T$, there are two \emph{rotations of $x$}:
having the order of the children of $x$ as in the original tree~$T$,
and reversing the order of the children of~$x$.
For a \PQP-node~$x$ of an \MPQ-tree~$T$ with $k$ children, there are $k!$ rotations of~$x$, each obtained by a different permutation of the children of~$x$.
Every tree that is obtained from~$T$ by a sequence of rotations of nodes (\ie, obtained by arbitrarily permuting the order of the children of \PQP-nodes and reversing the orders of the children of some \PQQ-nodes) is a \emph{rotation} of~$T$.
The defining property of an \MPQ-tree~$T$ of a graph~$G$ is that each leaf~$x$ of~$T$ corresponds to a maximal clique $A^T_x$ of~$G$ and the frontiers of rotations of~$T$ correspond bijectively to the consecutive arrangements of~$G$.
Observe that any two vertices adjacent in~$G$ are stored in links that are connected by an upward path in~$T$.
We say that~$T$ \emph{agrees} with an interval representation \cI of~$G$ if the order of the maximal cliques of~$G$ given by their clique points in \cI from left to right is the same as in the frontier of~$T$.
We assume the following properties of an \MPQ-tree of~$T$ (see \cite{KorteM89}, Lemma~2.2):
\begin{itemize}
	\item For a \PQP-node~$x$ with children $y_1,\ldots,y_k$, for every $i=1,\ldots,k$,
	there is at least one vertex stored in link $\{x,y_i\}$ or below $y_i$, \ie, $S_{xy_i} \cup B^T_{y_i} \neq \emptyset$.
	\item For a \PQQ-node~$x$ with children $y_1,\ldots,y_k$, we have $k \geq 3$.
	Further, for $S_i=S_{xy_i}$, we have:
	\begin{itemize}
		\item $S_1 \cap S_k = \emptyset$,
		$B^T_{y_1} \neq \emptyset$, $B^T_{y_k} \neq \emptyset$,
		$S_{1} \subsetneq S_{2}$, $S_{k} \subsetneq S_{k-1}$,
		\item $(S_i \cap S_{i+1})\setminus S_1 \neq \emptyset$, $(S_{i-1} \cap S_{i})\setminus S_k \neq \emptyset$, for $i=2,\ldots,k-1$.
	\end{itemize}
  \item For every \PQP-node~$y$ being a child of a \PQP-node~$x$, we have that $x$ is branching.
  \item For every \PQP-node~$y$ being a child of a \PQP-node~$x$, there is at least one vertex assigned to $x$ or $y$.
\end{itemize}
As a consequence, we get that a node~$x$ is not branching, then it is either a leaf \PQP-node, or a \PQP-node with just one child which is a \PQQ-node.

\subsection{Rotating \MPQ-Tree}
\label{sec:recognition-containment-rotation}
\begin{lemma}\label{lem:recognition_rotation}
	There is an algorithm that, given a containment interval graph~$G$, constructs an \MPQ-tree~$T$ that agrees with some containment representation of~$G$.
	The running time of this algorithm is in $O(nm)$
	where $n$ is the number of vertices of~$G$
	and $m$ is the total number of edges and arcs of~$G$.
\end{lemma}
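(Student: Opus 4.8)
The plan is to build the $\MPQ$-tree $T$ of the underlying interval graph $U(G)$ by the linear-time algorithm of Booth and Lueker~\cite{LuekerB79} in the Korte--M\"ohring form~\cite{KorteM89}, and then to fix a rotation of $T$ from the root downwards so that its frontier is the clique order of an actual containment representation of~$G$. The first ingredient is a translation of the arcs and edges of $G$ into combinatorial constraints on that clique order. For a vertex $v$, let $\mathrm{blk}(v)$ be the contiguous block of maximal cliques containing~$v$. If $\cI$ is a containment representation and $(u,v)$ is an arc, then $v\subseteq u$, so every clique containing~$v$ also contains~$u$, i.e.\ $\mathrm{blk}(v)\subseteq\mathrm{blk}(u)$. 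If $\{u,v\}$ is an edge, then $u$ and $v$ overlap with neither containing the other, and since (by the Helly property) an interval endpoint can only be placed between two consecutive clique points, $\mathrm{blk}(u)$ is then neither strictly inside nor strictly outside $\mathrm{blk}(v)$: the two blocks either properly cross or share an extreme clique. Call a rotation of $T$ \emph{admissible} if it meets these conditions for all vertices. It then suffices to find an admissible rotation whenever $G$ is a containment interval graph, since the third phase --- perturbing endpoints, which amounts to solving two-dimensional poset problems on the bags of vertices that span the same block and see every outside vertex the same way --- turns any admissible rotation into a containment representation, and does so independently of how the remaining rotational freedom is resolved.

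I would process the nodes of $T$ top-down, maintaining the invariant that the partial rotation fixed so far can still be completed to an admissible rotation agreeing with a containment representation (trivial at the root, and giving the lemma once $T$ is fully rotated). Note first that two vertices lying in the subtrees of two distinct children of a common node have disjoint blocks and hence are non-adjacent, so no arc or edge has its endpoints in the subtrees of two different children. Therefore, at a $\PQP$-node $x$ with children $y_1,\dots,y_k$ the only relevant constraints are edges $\{u,w\}$ where $u$ spans the whole subtree of $x$ (so $u$ is assigned to $x$ or to an ancestor) and $w$ lies in the subtree of some child $y_a$: such an edge demands that $y_a$ be the first or the last child of $x$ and that $w$ reach the corresponding extreme leaf of $y_a$. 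I would gather all such demands, together with any ``reach this extreme leaf of $x$'' demands inherited from ancestors, check that at most one child is forced to the left end and at most one to the right end (anything more is a conflict), place those children there and order the rest arbitrarily, and push the induced demands ``bring this vertex to this extreme leaf of $y_a$'' down into $y_a$. At a $\PQQ$-node there are only two rotations, and reversing it preserves the nesting and crossing relations among the index-segments; hence the arc and edge constraints between two vertices assigned there are orientation-independent and, since $G$ is a containment interval graph, automatically satisfied, whereas the ``reach this extreme leaf'' demands --- inherited ones and those created by edges from a segment-vertex to a vertex below a child --- select one of the two orientations, or leave both open, and propagate downward as before.

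For correctness I would argue by induction down the tree that the invariant is preserved. If $G$ is a containment interval graph, it has a containment representation $\cI$, whose clique order is an admissible rotation satisfying every demand accumulated so far; reading the demands off $\cI$ shows that at the current node the left/right forcing and the $\PQQ$-orientation are consistent, so the algorithm can make choices that keep some completion in agreement with a containment representation (here we use that, by the remark above, the third phase is insensitive to the leftover rotational freedom). For the running time, fixing the rotation of a node $x$ costs time proportional to the number of arcs and edges that are incident to a vertex ``introduced'' at $x$ or that carry a demand through $x$; each arc or edge is introduced at a single node and is carried through at most $O(n)$ nodes, so the total is $O(nm)$.

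The step I expect to be the main obstacle is precisely this propagation of demands and its consistency: a single edge can force a vertex to a specific extreme leaf deep inside a subtree, and many such demands --- originating from different edges and from several ancestor levels --- must be jointly satisfiable and resolvable greedily, without backtracking. The crux is to prove that when $G$ is a containment interval graph no two demands ever conflict (no node is asked to send two different vertices to the same extreme leaf, and no $\PQP$-node is asked to put three children at its ends) and that top-down resolution therefore never gets stuck. The cleanest route is to extract the witnessing demands from an a~priori containment representation of~$G$ and to show that every local choice of the algorithm can be aligned with that representation, so that the feasibility proof and the correctness induction become one and the same argument.
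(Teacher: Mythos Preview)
Your overall architecture---build the \MPQ-tree of $U(G)$, fix rotations top-down, keep the invariant that a correct completion survives, and put at most two ``special'' children of a \PQP-node at the two ends---matches the paper. The gap is in how you decide \emph{which} end. You rely on (i) direct edges $\{u,w\}$ with $u$ above $x$ and $w$ below $x$, and (ii) demands inherited from ancestors. You then claim that any rotation meeting the resulting pairwise block conditions (``admissible'') can be realized by the third phase. That claim is false, and because of it your algorithm can make a wrong left/right choice at a node where no inherited demand is present.

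Here is the obstruction in its simplest form. Take vertices $u,v,w$ with $\mathrm{blk}(w)\subsetneq\mathrm{blk}(v)\subsetneq\mathrm{blk}(u)$, edges $\{u,v\}$ and $\{v,w\}$, and arc $(u,w)$. Your admissibility only asks that $\mathrm{blk}(u),\mathrm{blk}(v)$ share an extreme and that $\mathrm{blk}(v),\mathrm{blk}(w)$ share an extreme. If both pairs share the \emph{same} (say right) extreme, then in the right endpoint group the edge $\{u,v\}$ forces $r_u<r_v$, the edge $\{v,w\}$ forces $r_v<r_w$, and the arc $(u,w)$ forces $r_w<r_u$: a cycle. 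So this ``admissible'' rotation is unrealizable. Concretely, let $x$ be a \PQP-node, $v$ assigned to~$x$, $u$ above $x$ extending into a leaf left of $x$'s subtree (so $\mathrm{blk}(u),\mathrm{blk}(v)$ share the right extreme of $x$), and $w$ below a child $y_a$ of $x$. Your scheme sees the edge $\{v,w\}$ and knows $y_a$ must be extremal, but nothing you collect tells you it must be the \emph{first} child; yet putting $y_a$ last is admissible and unrealizable. The point is that the edge $\{u,v\}$ lives entirely above $x$ and generates no demand in your framework, but it pins down which side $v$ ``sticks out'' on and thereby which end $w$ must reach.

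This is exactly what the paper's $\ell$-bound machinery is for: boundedness is propagated \emph{horizontally} through chains of edges among vertices above $x$ (not only vertically from ancestors), and an edge from any bound vertex to a vertex below $x$ then determines the rotation uniquely; if no such edge exists, the paper shows both rotations are correct by a mirroring argument. Your ``inherited from ancestors'' demands do not capture this horizontal propagation. To repair the argument you need an analogue of the $\ell$-bound classification (or an equivalent alternating-path analysis among the vertices above $x$) before you can commit to left versus right at $x$.
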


\begin{proof}
	Given a mixed graph~$G$, if~$G$ is a containment interval graph, then clearly~$U(G)$ is an interval graph and we can construct an \MPQ-tree~$T$ of~$U(G)$ in linear time using the algorithm by Korte and M{\"o}hring~\cite{KorteM89}.
	Further, we have that arcs induce a transitive directed acyclic graph, \ie, for every arcs $(u,v)$ and $(v,w)$, there is an arc $(u,w)$ and there is no directed cycle in $G$.
	
	We call a rotation of~$T$ \emph{correct} if it agrees with some containment representation of~$G$.
	As we assume~$G$ to be a containment interval graph, there is at least one correct rotation of~$T$, and our goal is to find some correct rotation of~$T$.
	Our algorithm decides the rotation of every node in~$T$, one-by-one, in any top-down order, \ie, from the root to the leafs.
	Thus, when the rotation of a node~$x$ is to be decided, the rotation of every node above~$x$ is already decided.
	Our algorithm keeps the invariant that before, and after, deciding the rotation of every single node, there is at least one correct rotation of~$T$ that agrees with the rotation of the already decided nodes.
	The invariant is trivially satisfied before the first rotation, and because it is satisfied after the last rotation, the algorithm constructs a correct rotation of~$T$.
	
	From now on, we focus on choosing a rotation of a single branching node~$x$.
	Let $y_1,\ldots,y_k$ denote the children of~$x$.
	We have $k \geq 2$ ($k \geq 3$ when $x$ is a \PQQ-node), and for each $i=1,\ldots,k$, let $B_i=S_{xy_i} \cup B^T_{y_i}$.
	We have $\bigcup_{i=1}^k B_i = B^T_x \neq \emptyset$.
	
  Let $\tilde T$ denote the (unknown) set of all correct rotations of $T$ that agree with the rotations of the already decided nodes.
	We have, by our invariant, that $\tilde T \neq \emptyset$,
	and our goal is to choose a rotation of $x$ that agrees with at least one rotation in $\tilde T$.
	We call any such rotation of $x$ a \emph{correct rotation of $x$}.
	
	For each vertex~$v$ above~$x$ in~$T$, it is already decided if~$v$ is above some node of~$T$ that is to the left (right, resp.) of~$x$ in all rotations in~$\tilde T$ (as this depends only on the rotation of the nodes above~$x$ in~$T$).
	If it is, then there is a maximum clique that: includes $v$, does not include any of the vertices below~$x$, and in the frontier of every $T' \in \tilde T$ is strictly to the left (right, resp.) of all maximum cliques containing vertices below $x$.
	This means that in every interval representation that agrees with some $T' \in \tilde T$, the interval representing~$v$ contains a clique-point that is strictly to the left (right, resp.) of all left endpoints (right endpoints, resp.) of intervals representing vertices below~$x$ in~$T$.
	We call such a vertex~$v$ \emph{left-$1$-bound} (\emph{right-$1$-bound}, resp.).
	Observe, that a vertex can be both left-$1$-bound and right-$1$-bound.
	If a vertex is neither left-$1$-, nor right-$1$-bound, we call it \emph{$1$-unbound}.
	For every $\ell \geq 2$, if $v$ is $(\ell-1)$-unbound and has an edge to an $(\ell-1)$-left-bound vertex~$u$, we call~$v$ $\ell$-right-bound.
  Similarly, if $v$ is $(\ell-1)$-unbound and has an edge to an $(\ell-1)$-right-bound vertex~$u$, we call~$v$ $\ell$-left-bound.
	If a vertex is $(\ell-1)$-unbound and neither left-$\ell$-, nor right-$\ell$-bound, we call it $\ell$-unbound.
	Lastly, a vertex is \emph{$\ell$-bound} if it is left-$\ell$-bound or right-$\ell$-bound, \emph{bound} if it is $\ell$-bound for some $\ell\geq1$, and \emph{unbound} if it is $\ell$-unbound for every $\ell \geq 1$.
	
	Observe that the properties of \MPQ-trees guarantee that for a $1$-unbound vertex~$v$, we have that either~$x$ is a \PQP-node and~$v$ is assigned to~$x$, or~$x$ is a \PQQ-node with a parent \PQP-node~$z$, $x$ is the only child of $z$, and~$v$ is assigned to~$z$.

	Observe that, intuitively, it is ``natural'' for intervals representing vertices below a node to be contained in intervals representing vertices above it.
	Let $x$ be a node in $T$, $v \in A^T_x$, and $w \in B^T_x$.
	First notice that it is impossible to have an arc $(w,v)$, \ie, it is impossible for the interval of $w$ to contain the interval of $v$.
	As $x$ is a branching node, $w$ omits clique-points in the subtree of at least one child of $x$.
	These clique-points are contained in the interval of $v$.
	It is still possible to have an edge joining a vertex above $x$ and a vertex below $x$.
	Each such edge allows us to deduce some information on the correct rotations of $x$ and these edges are crucial to our algorithm.
	
\begin{claim}\label{clm:bound-pair}
  For every $\ell\geq1$,
	for an $\ell$-bound vertex $v$ and an $\ell$-unbound vertex $w$,
  there is no arc $(w,v)$.
\end{claim}
\begin{claimproof}
	We prove this by induction on $\ell$.
	For $\ell=1$, every interval of a $1$-bound vertex contains a clique-point such that if any other interval contains it, the corresponding vertex is also $1$-bound.
	For $\ell>1$, let $u$ be an $(\ell-1)$-bound vertex that certifies that $v$ is $\ell$-bound.
  By the induction hypothesis, there is no arc $(w,u)$, and there is no edge $\{u,w\}$ as $w$ is $\ell$-unbound.
  Thus, there is an arc $(u,w)$.
  Assuming to the contrary the existence of an arc $(w,v)$, we get an arc $(u,v)$ from transitivity of arcs.
  This contradicts with the existence of the edge $\{u,v\}$.
\end{claimproof}
	
\begin{claim}\label{clm:bound-path}
  First, let $v_1,\ldots,v_\ell$ be a sequence of bound vertices such
  that (i)~for every odd~$i \in [\ell]$, $v_i$ is left-$i$-bound;
  (ii)~for every even~$i \in [\ell]$, $v_i$ is right-$i$-bound; and
  (iii)~for every $i \in [\ell-1]$, there is an edge $\{v_i,v_{i+1}\}$.
  For every interval representation that agrees with some $T' \in \tilde T$, for every odd (even, resp.) $2 \leq i \leq \ell$,
	the interval representing~$v_{i}$ contains the left (right, resp.) endpoint of the interval representing~$v_{i-1}$.\\

  Second, let $v_1,\dots,v_\ell$ be a sequence of bound vertices such
  that (i)~for every odd~$i \in [\ell]$, $v_i$ is right-$i$-bound;
  (ii)~for every even~$i \in [\ell]$, $v_i$ is left-$i$-bound; and
  (iii)~for every $i \in [\ell-1]$, there is an edge $\{v_i,v_{i+1}\}$.
	For every interval representation that agrees with some $T' \in \tilde T$, for every odd (even, resp.) $2 \leq i \leq \ell$,
	the interval representing~$v_{i}$ contains the right (left, resp.) endpoint of the interval representing~$v_{i-1}$.
\end{claim}
\begin{claimproof}
	By the previous claim, we have that there is an arc $(v_i,v_j)$ for every
	$i \in [\ell -2]$ and $j \in \{i+2, \dots, \ell\}$.
  As the second part of the claim is symmetric, we prove only the first part.
	The proof is by induction on $i$.
	For $i=2$, we have: $v_1$ is left-$1$-bound, while~$v_2$ is not,
	and there is an edge $\{v_1,v_2\}$, which means that~$v_2$ contains the right endpoint of~$v_1$.
	For odd $i>2$, we have: $v_i$ is contained in~$v_{i-2}$, $v_{i-1}$ contains the right endpoint of~$v_{i-2}$ by induction, $v_{i}$ does not include the right endpoint of~$v_{i-1}$, and the edge $\{v_{i-1},v_{i}\}$ means that $v_{i}$ contains the left endpoint of~$v_{i-1}$.
	For even $i>2$, the argument is symmetric.
\end{claimproof}
		
	\subparagraph*{Rotating \PQQ-nodes.}
	Observe that each vertex $w \in B^T_x$ is present in at most one of $B_1$ or~$B_k$.
	(Recall that $B_i=S_{xy_i} \cup B^T_{y_i}$
	for $i \in [k]$ and $y_1, \dots, y_k$ are the children of~$x$.)
	
	We shall prove that if there is at least one edge joining a vertex $w$ below $x$ and a bound vertex $v$ above $x$, then there is only one correct rotation of $x$.
	If $v$ is left-$\ell$-bound (right-$\ell$-bound, resp.) then $x$ needs to be rotated so that $w$ is in the last (first, resp.) child of~$x$.
	Otherwise, when bound vertices above $x$ have arcs towards vertices below $x$, then both rotations are correct.

  Assume that there is an edge for some $\ell$-bound $v \in A^T_x$ and $w \in B^T_x$, and $\ell$ is minimum possible (\ie there are no edges joining $\ell'$-bound vertices and $B^T_x$ for $\ell' < \ell$).
  Assume $v$ is left-$\ell$-bound, as the other case is symmetric.
  We prove that $w$ must be assigned to or below the last child of~$x$ in every $T' \in \tilde T$.
	For $\ell=1$, $v$ is left-$1$-bound, and the left endpoint of $v$ is to the left of the left endpoint of $w$ in every containment representation that agrees with some $T' \in \tilde T$.
	Thus, to realize the edge $\{v,w\}$, the right endpoint of $v$ is to the left of the right endpoint of $w$, which requires $w$ to be in the last child of $x$ (as otherwise there is some clique-point to the right of $w$ that is in $v$).
	For $\ell>1$, let $u$ be the right-$(\ell-1)$-bound vertex with edge $\{u,v\}$.
  By Claim~\ref{clm:bound-path}, we know that $v$ contains the left endpoint of $u$.
	Because $\ell$ is minimum, $u$ and $w$ are not connected by an edge, and by Claim~\ref{clm:bound-path} the interval of $u$ contains the interval of $w$.
	Thus, $v$ contains the left enpoint of $w$, and in order to have the right endpoint of $w$ after the right endpoint of $v$,
	we need $w$ to be assigned to or below the last child of~$x$.
	
	Similarly if $v$ is right-bound, then $w$ must be in the first child in every $T' \in \tilde T$.

	For the second part, we assume that there is an arc from every bound vertex above $x$ towards every vertex below $x$.
	We know that there is also an arc from every bound vertex to every unbound vertex.
	Let $B$ denote the set of all vertices that are either below $x$ or unbound.
	Observe that any containment representation of $G$ has all the endpoints of intervals representing vertices in $B$ strictly inside the intersection of intervals representing bound vertices.
	Thus, reversing the order of all endpoints of intervals representing vertices in $B$ gives another containment representation of $G$.
	This other representation has the order of clique-points represented by the subtree of $x$ reversed.
	Thus, both orientations of $x$ are correct and algorithm can choose any of them.
	
	\subparagraph*{Rotating \PQP-nodes.}
	We are to choose the order of the children $y_1,\ldots,y_k$ of~$x$.
	Observe that in this case, for a \PQP-node, the sets $B_1, \dots, B_k$ are pairwise disjoint, and there are neither arcs, nor edges joining two different sets $B_i$ and $B_j$.

  Now, assume $k\geq3$, and observe that for every vertex $v \in A^T_x$ above $x$, and every vertex $w \in B_i$ assigned to or below a middle child of $x$, we have an arc $(v,w)$.
	This is because there is at least one clique-point below the first, and below the last child of $x$.
	Children assigned to or below middle children are not in these cliques.
	Thus, an interval representing $v$ must contain an interval representing $w$.

  Now, we call a child $y_i$ of $x$ to be \emph{special}, if there is an edge joining a vertex $v \in A^T_x$ with a vertex $w \in B_i$.
  We already know that there are at most two special children of~$x$, as otherwise $\tilde T = \emptyset$.
	Observe that if $\sigma$ is a correct rotation of $x$, then any $\sigma'$ that is obtained from $\sigma$ by arbitrarily permuting the middle children is also a correct rotation of $x$.
	This is because in the rotation $\sigma$, we have that all middle children of $x$ are not special, and there are neither edges nor arcs between different sets $B_i$ and $B_j$.

	Now, let us fix a single permutation $\psi$ of the children of $x$ in which every special child of $x$ is either the first, or the last child.
	Let $\psi'$ denote the permutation obtained by reversing $\psi$.
	It is easy to see that if there is a correct rotation of $x$ at all, then also either $\psi$ or $\psi'$ is a correct rotation of $x$.
	Now, we can apply the same reasoning as for the \PQQ-nodes.
	If there is at least one edge joining a vertex $w$ below $x$ and a bound vertex $v$ above $x$, then only one of $\psi$, or $\psi'$ is a correct rotation of $x$.
	Otherwise, both $\psi$, and $\psi'$ are correct rotations of $x$.

  \begin{claim}
    The rotation of a single node can be decided in time $O(n+m)$.
  \end{claim}
\begin{claimproof}
  For a \PQQ-node we first need to decide which vertices are left/right-$\ell$-bound for different $\ell$.
  We can first calculate the set $A^T_x$.
  Then traverse the tree upwards and in each node mark left/right-$1$-bound vertices.
  Then use BFS to decide which vertices in $A^T_x$ are left/right-$\ell$-bound for different $\ell$.
  This can be easily done in $O(n+m)$ time.
  
  Then, for each edge that connects an $\ell$-bound vertex $v$ with a vertex $w$, we need to decide if $w \in B_1$ or $w \in B_2$.
  Observe that the queries ``whether a vertex $w$ is in $B_i$?'' can be answered in constant time (by looking at the index of the first/last clique in the frontier that includes $w$, and on the index of the first/last clique in the frontier that is below a node $y_i$).
  Thus, we can decide the rotation of a \PQQ-node in $O(n+m)$ time.

  For a \PQP-node, we first need to decide which children are special.
  For this we need to calculate the set $A^T_x$, and sets $B^i$, and then for each edge check if it makes some child special.
  This can be easily done in $O(n+m)$ time.
  The rest of the analysis is the same as for a \PQQ-node.
\end{claimproof}

  As there are $n$ nodes to rotate, and by the previous claim, we conclude that the running time of the algorithm is $O(nm)$.
\end{proof}

\subsection{Perturbing Endpoints}
\label{sec:recognition-containment-endpoints}

\begin{lemma}\label{lem:recognition_endpoints}
	There is an algorithm that, given an \MPQ-tree~$T$ that agrees with
	some containment representation of a mixed graph~$G$, constructs a
	containment representation \cI of~$G$ such that $T$ agrees
	with \cI.
	The running time of this algorithm is in $O(nm)$
	where $n$ is the number of vertices of~$G$
	and $m$ is the total number of edges and arcs of~$G$.
\end{lemma}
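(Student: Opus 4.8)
The plan is to carry out the third phase of the recognition algorithm: given the \MPQ-tree~$T$, or equivalently the order $Q_1,\dots,Q_t$ of the maximal cliques of~$U(G)$ given by the frontier of the supplied rotation, produce concrete interval endpoints for the vertices. For every vertex~$v$, read off the indices $a_v\le b_v$ of the first and the last clique in $Q_1,\dots,Q_t$ that contain~$v$; this costs $O(n+m)$ time. Fix clique points $p_0<p_1<\dots<p_t<p_{t+1}$ on the real line, and agree to place the left endpoint $\ell_v$ of the interval of~$v$ inside the open \emph{slot} $(p_{a_v-1},p_{a_v})$ and its right endpoint $r_v$ inside $(p_{b_v},p_{b_v+1})$. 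Under any such placement the intervals of $v$ and $w$ intersect exactly when $[a_v,b_v]\cap[a_w,b_w]\ne\emptyset$, which is precisely adjacency in $U(G)$, so the produced representation is automatically an interval representation that $T$ agrees with. What remains is to order, inside every slot, the endpoints lying in it so that for each intersecting pair the geometry (nesting versus overlapping) matches the arc or edge of~$G$ on that pair.

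Next I would record all \emph{forced} orders. For intersecting $u,w$ with $a_u<a_w$ one necessarily has $\ell_u<\ell_w$; then $b_u<b_w$ forces an overlap, $b_u>b_w$ forces $u\supsetneq w$, and $b_u=b_w$ makes the relation depend only on whether $r_u<r_w$ or $r_w<r_u$; symmetrically when $b_u<b_w$, and disjoint pairs force $r_u<\ell_w$ when $b_u<a_w$. Since $T$ agrees with some containment representation, for each such pair exactly one option is compatible with the arc or edge of~$G$ on that pair, and that fixes the relevant endpoint order. The only undetermined pairs are those $u,w$ that share their first clique, share their last clique, and have the same connection type (outgoing arc, incoming arc, or edge) to every other vertex; call the resulting equivalence classes \emph{blocks}. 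Within a block~$A$ every pair is an arc or an edge, and transitivity of containment makes the arcs of $G[A]$ a strict partial order~$P_A$. Conversely, a two-element realizer $\{L_1,L_2\}$ of~$P_A$ is exactly what is needed: if the left endpoints of~$A$ are ordered by $L_1$ and its right endpoints by the reverse of~$L_2$, then $u\supsetneq w$ holds iff $u<w$ in both $L_1$ and $L_2$, that is, iff $(u,w)\in P_A$, and $u,w$ overlap iff they are incomparable in~$P_A$. For each block I would therefore call the linear-time algorithm of McConnell and Spinrad~\cite{McConnellS99} to test whether $P_A$ has dimension at most two and, if so, to output such a realizer; under the hypothesis of the lemma this always succeeds, since in any containment representation agreeing with~$T$ the left-endpoint order and the reversed right-endpoint order restrict to a realizer of~$P_A$.

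Finally I would glue: extend the forced partial order inside each slot to a linear order that additionally respects, on the portions of the blocks that lie in it, the orders produced in the previous step; then instantiate coordinates (say $p_i=i$, with the endpoints spread inside the appropriate open unit intervals) and read off~$\cI$. A routine check confirms that $\cC[\cI]\cong G$ and that the clique points of~$\cI$ reproduce $Q_1,\dots,Q_t$, so that $T$ agrees with~$\cI$. I expect this gluing step to be the crux: one must show that the forced constraints linking two different blocks are one-sided enough that the chosen per-block orders extend through them without creating a directed cycle, and that the realizer picked inside a block is consistent with everything forced onto it from outside. The cleanest route is probably to fold every forced relation that falls inside a block into~$P_A$ before testing two-dimensionality, and then to argue, using the finer structure of the \MPQ-tree and once more the assumption that $T$ agrees with a containment representation, that the enlarged posets remain two-dimensional and that their realizers assemble into globally consistent slot orders. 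For the running time, computing the frontier data and all forced relations is $O(n+m)$, and the calls to~\cite{McConnellS99} are linear in the sizes of the respective blocks and sum to $O(n+m)$, so the whole procedure stays well within the $O(nm)$ bound of \cref{lem:recognition_rotation}.
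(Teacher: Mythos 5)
Your overall plan is the paper's plan: place endpoints in slots determined by the first/last clique indices, record the forced endpoint orders for pairs that differ in at least one of these indices, isolate the classes of mutually ``undecided'' vertices, and realize each such class via a two-dimensional realizer computed by McConnell and Spinrad. However, the step you yourself flag as the crux --- showing that the per-block realizers can be merged with the forced constraints into consistent slot orders --- is left as a conjecture (``the cleanest route is probably to fold every forced relation \dots and then to argue \dots''), and that is precisely where the actual work lies. The paper closes this gap not by enlarging the posets, but by a contraction argument: it first shows that for the residual set $U$ (your block), \emph{every} vertex outside $U$ behaves identically (same connection type) on all vertices of $U$. Hence one may delete all of $U$ except a single representative $u$, solve the now fully decided instance, and re-insert the whole block inside an $\varepsilon$-neighborhood of $u$'s two endpoints, ordering the left endpoints by $L_1$ and the right endpoints by the reverse of $L_2$. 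Because the block is geometrically confined to these $\varepsilon$-ranges and all outside relations to it are uniform, the realizer never interacts with any forced constraint, so no consistency argument across blocks is needed. Without this (or an equivalent) argument, your proof is incomplete.

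A second, smaller gap: you assert that the only undetermined pairs are those sharing both clique indices \emph{and} having the same connection type to every other vertex, but your ``forced orders'' only cover pairs distinguished by a clique index. When a pair $u,v$ with $L_u=L_v$, $R_u=R_v$ is distinguished by a witness $w$ that itself shares both indices with $u$ and $v$, the order of $u,v$ is only determined \emph{relative to} that of the pairs $\{u,w\}$ and $\{v,w\}$, which may themselves be undecided; the paper needs an explicit propagation phase (decided pairs deciding further pairs) to justify the characterization you state as immediate. Your realizer correspondence inside a block and the running-time accounting are fine and match the paper.
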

\begin{proof}
  The frontier of~$T$ fixes the left-to-right order of clique-points of maximal-cliques in~$G$.
  We need to respect that order, but still we have some freedom in choosing the exact locations of the endpoints.
  For any vertex~$v$, let $L_v$ ($R_v$, resp.) denote the index of the first (last, resp.) clique in the frontier of~$T$ that includes $v$.
  For any $L \ge 1$ ($R \ge 1$, resp.), the \emph{left-$L$-group} (\emph{right-$R$-group}, resp.) is the set of all vertices $v$ with $L_v=L$ ($R_v=R$, resp.).
  It is easy to see that any interval representation of~$U(G)$ that agrees with~$T$ can be stretched so that, for every vertex $v$, we have that
  the left endpoint $l_v$ of~$v$ is a real in the open interval $l_v \in (L_v-\frac{1}{2},L_v)$, and the right endpoint $r_v$ of~$v$ is in the interval $r_v \in (R_v,R_v+\frac{1}{2})$.
  Obviously, any representation satisfying these conditions on the locations of the endpoints agrees with~$T$.
  We are free to determine the order among endpoints in each group independently, so that the resulting intervals are a containment representation of~$G$.

  We will now collect different order constraints on the relative location of pairs of the endpoints.
	First, consider two adjacent vertices $u$ and $v$ with $L_u = L_v$, and $R_u < R_v$.
	As the right endpoints are in different right-groups, we have $r_u < r_v$.
	If there is the edge $\{u, v\}$, then, regarding the relative order, of the left endpoints, we need to have $l_u<l_v$.
  If there is the arc $(u, v)$, then we need to have $l_u>l_v$.
	The arc $(v, u)$ is impossible to realize.
	Similarly, if any of the left/right-groups is common for $u$ and $v$, but the other one is different, the relative order of endpoints is fixed.

  We have collected information about all pairs of vertices $u$ and $v$, except for these with $L_u = L_v$ and $R_u = R_v$.
  In this case,	if there is an arc $(u, v)$ or $(v, u)$, then again the relative order of left and right endpoints is fixed.
	
  Now, we assume that there is an edge $\{u,v\}$ and we want to say something about the relative order of the endpoints.
  Clearly, we have $l_u < l_v \iff r_u < r_v$, but we would like to decide the correct order.
  For a third vertex $w$, we say that $w$ \emph{behaves the same} (\emph{differently}, resp.) on $u$ and $v$, when $w$ is connected to $u$ with the same type of connection (edge, arc towards, arc from) as to $v$ (otherwise, resp.).
	Assume that there is a vertex~$w$ with $L_u = L_v = L_w$ and $R_u = R_v \ne R_w$ that behaves differently on $u$ and $v$.
  Note that we do not have two arcs in different directions joining $w$ with $u$ and $v$ as it would imply an arc between $u$ and $v$ by transitivity.
	Thus, we assume w.l.o.g.\ that $w$ is connected by an arc $(u, w)$ or $(w, u)$ with $u$, but with an edge $\{v, w\}$ with $v$.
	Then, the relative order of $u$ and $v$ is fixed because there is only one relative order of the three left endpoints that allows for this situation.
  Similarly for the case $L_u = L_v \ne L_w$ or $R_u = R_v = R_w$.

  For a pair of vertices $u$, $v$ with the edge $\{u,v\}$, $L_u=L_v$, and $R_u=R_v$,
  if the above rule gives us the relative order of their endpoints,
  we call such pair \emph{decided}.
  Otherwise, it is \emph{undecided}.
	While there are undecided pairs, we propagate the order of the decided pairs as follows.
	Consider a vertex~$w$ with $L_u = L_w$ and $R_u = R_w$
	that behaves differently on~$u$ and~$v$,
	\ie, there is an arc $(u, w)$ (or $(w, u)$) and an edge $\{v, w\}$
	(two arcs in different directions is not possible as argued before),
	and let $\{v, w\}$ be decided.
	Then, the relative orders of the endpoints of $v$ and $w$
	and the ones of $u$ and $w$ are fixed, which means
	that the relative orders of the endpoints of $u$ and $v$ follow.
	From now on the pair $u$, $v$ is also decided and we apply this procedure as long as it is possible.

  At this point, if there are some undecided pairs left, choose any vertex $u$ from an undecided pair, and let $U$ be the set of vertices reachable from $u$ by a path of undecided edges.
  We have $|U|\geq2$, and every vertex $w \notin U$ behaves the same on any two vertices in $U$
  as otherwise we would have applied
  one of the order constraints described above.
  We remove all vertices in $U$ except $u$ and solve the smaller instance of the problem.
  
  We can find all order constraints in $O(nm)$ time
  as it suffices to consider each edge together with each vertex.

  Now, it remains to prove, that we can insert back the vetices in $U$ to the solution.
  Observe that each vertex $v$ in $U$ can be placed in the position of $u$ and this position satisfies all order constraints of $u$ against vertices not in $U$.
  Thus, we will put all the left (right, resp.) endpoints of vertices in $U$ in an $\varepsilon$ range around the left (right, resp.) endpoint of $u$.
  Consider the mixed graph induced by the vertices in $U$.
  This is a complete mixed acyclic graph and can be seen as a partially ordered set.
  
  A \emph{partially ordered set}, or a \emph{poset} for short, is a transitive directed acyclic graph.
  A poset~$P$ is \emph{total} if, for every pair of vertices~$u$ and~$v$, there is either an arc $(u,v)$ or an arc $(v,u)$ in~$P$.
  We can conveniently represent a total poset~$P$ by a linear order of its vertices $v_1 < v_2 < \dots < v_n$ meaning that there is an arc $(v_i,v_j)$ for each $1 \leq i < j \leq n$.
  A poset~$P$ is \emph{two-dimensional} if the arc set of~$P$ is the intersection of the arc sets of two total posets on the same set of vertices as~$P$.
  McConnell and Spinrad~\cite{McConnellS99} gave a linear-time algorithm that, given a directed graph~$D$ as input, decides whether~$D$ is a two-dimensional poset.
  If the answer is yes, the algorithm also constructs a \emph{realizer}, that is, (in this case) two linear orders $(R_1,R_2)$ on the vertex set of~$D$ such that
  \begin{align*}
  \textrm{arc $(u,v)$ is in~$D$} & \iff
  \left[(\textrm{$u < v$ in $R_1$}) \wedge (\textrm{$u < v$ in $R_2$})\right] \textrm{.}
  \end{align*}

  \begin{claim}
    The mixed graph induced by $U$ is a containment interval graph if and only if the poset of $U$ is two-dimensional.
  \end{claim}
  \begin{claimproof}
    First, for the ``if'' direction, assume that $U$ is two-dimensional, and let $L_1$, $L_2$ be two linear extensions of $U$ such that $U$ is the intersection of $L_1$ and $L_2$.
    Now, we construct the containment interval representation of $G[U]$ in the following way:
    We choose the locations of the left enpoints of the intervals representing the vertices in $U$ in the open interval $(-\frac{1}{2},0)$ so that their left-to-right order is exactly as in $L_1$.
    Similarly, we place right endpoints in the interval $(0, \frac{1}{2})$ so that their right-to-left order is exactly as in $L_2$.
    Now, for an arc $(u,v)$ we have that $u \leq v$ in the poset, and $u \leq_{L_1} v$ and $u \leq_{L_2} v$.
    Thus, the left endpoint of $u$ is to the left of left endpoint of $v$, and right endpoint of $u$ is to the right of the right endpoint of $v$, as required.
    Conversely, for an edge $\{u,v\}$ we have that $u$ and $v$ are incomparable in the poset, and $u \leq_{L_1} v$ and $v \leq_{L_2} u$ (or the other way around, both inequalities are reversed).
    Thus, the resulting intervals overlap without containment, and the resulting set of intervals is a containment representation of~$G[U]$.

    For the other direction, given any containment interval representation, let $L_1$ be a linear order on $U$ given by the left-to-right order of the left endpoints of the intervals, and $L_2$ be given by the right-to-lleft order of the right endpoints.
    Now, after the previous argument, it is easy to see that $L_1, L_2$ is a realizer of the poset of $U$, and $U$ is of two-dimensional.
  \end{claimproof}

  By the claim, we can realize $G[U]$ within small ranges designated to the endpoints of $u$.
  As the running time of this step is linear,
  the resulting running time of this algorithm is in $O(nm)$.
\end{proof}

\subsection{Final Proof}

\cref{thm:recognition} follows easily from \cref{lem:recognition_rotation,lem:recognition_endpoints}.

\begin{proof}[Proof of \cref{thm:recognition}]
	Our algorithm, given a containment interval graph~$G$, applies the
	algorithm from Lemma~\ref{lem:recognition_rotation} to obtain an \MPQ-tree~$T$ that agrees with some containment representation of~$G$.
	Then, using Lemma~\ref{lem:recognition_endpoints}, it constructs a containment representation of~$G$.
	If any of the phases fails, then we know that~$G$ is not a containment interval graph, and we can reject the input.
	Otherwise, our algorithm accepts the input and returns a containment representation of~$G$.
  As both phases can be implemented to run in $O(nm)$ time, we get that our algorithm recognizes containment interval graphs in $O(nm)$ time.
\end{proof}

\end{document}